\def\temp{&} \catcode`&=\active \let&=\temp
\newcommand{\TAB}{\quad}
\newcommand{\NOT}{\centernot}
\newcommand{\differential}{d}
\newcommand{\CSTAR}{\ensuremath{C^*}}
\DeclareMathSymbol{\mlq}{\mathord}{operators}{'134}
\DeclareMathSymbol{\mrq}{\mathord}{operators}{'42}
\newcommand{\BLANK}{{-}}
\newenvironment{TIKZCD}{\[\begin{tikzcd}}{\end{tikzcd}\]\ignorespacesafterend}
\newcommand{\MATRIX}[1]{\left(\hskip\arraycolsep\begin{matrix}#1\end{matrix}\hskip\arraycolsep\right)}
\newcommand{\mediumMATRIX}[1]{
  \begingroup
  \setlength\arraycolsep{2pt}
  \renewcommand*{\arraystretch}{0.6}
  \left(\hskip\arraycolsep\begin{matrix}#1\end{matrix}\hskip\arraycolsep\right)
  \endgroup}
\newcommand{\smallMATRIX}[1]{\left(\begin{smallmatrix}#1\end{smallmatrix}\right)}
\newcommand\medium[1]{\scalerel*[5pt]{\big#1}{%
  \ensurestackMath{\addstackgap[1.5pt]{\big#1}}}}
\newcommand\mediumleft[1]{\mathopen{\medium{#1}}}
\newcommand\mediumright[1]{\mathclose{\medium{#1}}}
\renewcommand{\subset}{\subseteq}
\renewcommand{\phi}{\varphi}
\renewcommand{\epsilon}{\varepsilon}
\newcommand{\xRightarrow}[2][]{\ext@arrow 0359\Rightarrowfill@{#1}{#2}}
\newcommand{\repr}{\mathrm{repr}}
\newcommand{\irrep}{\mathrm{irrep}}
\newcommand{\INTEGERS}{\mathbb{Z}}
\newcommand{\REALS}{\mathbb{R}}
\newcommand{\COMPLEX}{\mathbb{C}}
\newcommand{\TORUS}{\mathbb{T}}
\DeclareMathOperator{\Realpart}{Re}
\newcommand{\id}{\mathrm{id}}
\newcommand{\inv}{^{-1}}
\DeclareMathOperator{\im}{im}
\DeclareMathOperator{\rank}{rk}
\newcommand{\maxket}{\ket{\max}}
\newcommand{\winning}{\mathrm{winning}}
\newcommand{\aux}{\mathrm{aux}}
\newcommand{\game}{\mathrm{game}}
\newcommand{\bias}{\mathrm{bias}}
\newcommand{\best}{\mathrm{best}}
\newcommand{\rest}{\mathrm{rest}}
\newcommand{\CHSH}{\mathrm{CHSH}}
\newcommand{\Alice}{\mathrm{Alice}}
\newcommand{\Bob}{\mathrm{Bob}}
\newcommand{\twoplayer}{\mathrm{two~player}}
\newcommand{\binaryinput}{\mbox{binary--input}}
\newcommand{\binaryoutput}{\mbox{binary--output}}
\newcommand{\words}{\mathrm{words}}
\newcommand{\spectrum}{\sigma}
\DeclareMathOperator*{\LINEARSPAN}{span}
\DeclareMathOperator{\pos}{pos}
\newcommand{\CONTINUOUS}{C}
\newcommand{\tensor}{\mathbin{\otimes}}
\assignpagestyle{\chapter}{empty}
\titlespacing{\chapter}{0pt}{-2\baselineskip}{2\baselineskip}
\tikzset{every loop/.style={ distance=1cm, out=120, in=60, -> }}
\tikzset{help lines/.style={very thin, color=lightgray, dashed}}
\tikzset{axis lines/.style={very thin, color=lightgray}}
\def\mathcolor#1#{\@mathcolor{#1}}
\def\@mathcolor#1#2#3{%
  \protect\leavevmode
  \begingroup
    \color#1{#2}#3%
  \endgroup}
\let\originalleft\left
\let\originalright\right
\renewcommand{\left}{\mathopen{}\mathclose\bgroup\originalleft}
\renewcommand{\right}{\aftergroup\egroup\originalright}
\newtheoremstyle{MYBREAK}
  {}          
  {}          
  {\itshape}  
  {}          
  {\bfseries} 
  {:}         
  {\newline}  
  {}          
\newtheoremstyle{MYPLAIN}
  {\topsep}   
  {\topsep}   
  {\itshape}  
  {15pt}          
  {\bfseries} 
  {}         
  {5pt plus 1pt minus 1pt} 
  {}          
\theoremstyle{definition}
\newtheorem{DEF}{Definition}[section]
\theoremstyle{plain}
\newtheorem{PROP}[DEF]{Proposition}
\newtheorem{THM}[DEF]{Theorem}
\newtheorem{INTROTHM}{Theorem}
\newtheorem*{INTROPROP}{Proposition}
\theoremstyle{MYBREAK}
\newtheorem{BREAKPROP}[DEF]{Proposition}
\theoremstyle{MYBREAK}
\newtheorem*{NONUMBER-BREAKTHM}{Theorem}
\newtheorem*{NONUMBER-BREAKCOR}{Corollary}
\theoremstyle{plain}
\newtheorem*{NONUMBER-THM}{Theorem}
\title{The tilted CHSH games:
\texorpdfstring{\\}{  }
an operator algebraic classification}
\author{Alexander Frei and Azin Shahiri}
\date{\today}
\begin{document}
\maketitle
\thispagestyle{empty}

\begin{abstract}
  We introduce a general systematic procedure for solving any\linebreak
  \mbox{binary--input} \mbox{binary--output} game using operator algebraic techniques on the representation theory for the underlying group,
  which we then illustrate on the prominent class of tilted CHSH games:\\
  We derive for those an entire characterisation on the region exhibiting some quantum advantage and in particular derive a greatly simplified description for the required amount of anticommutation on observables (as being an essential ingredient in several adjacent articles).\\
  We further derive an abstract algebraic representation--free classification on the unique operator algebraic state maximising above quantum value.\linebreak
  In particular the resulting operator algebraic state entails uniqueness for its corresponding correlation, including all higher and mixed moments.\\
  Finally the main purpose of this article is to provide above simplified description for the required amount of anticommutation and an abstract algebraic characterisation for their corresponding unique optimal state, both defining a key ingredient in upcoming work by the authors.
\end{abstract}

\section*{Introduction}

Our motivation stems from the recently obtained important separation between quantum spatial and finite dimensional correlations as obtained in \cite{COLADANGELO-STARK}
\[
  p(\BLANK|\BLANK)\in C_{qs}(A,B|X,Y)
  \TAB\text{whereas}\TAB
  p(\BLANK|\BLANK)\notin C_{q}(A,B|X,Y)
\]
which builds on the following essential phenomena from tilted CHSH games:\linebreak
As a first phenomenon the tilted CHSH games are known to admit a unique optimal strategy at least for up to local dilation of the form
\[
  \mediumMATRIX{u'(x)\\[2\jot]&0}\tensor\mediumMATRIX{1\\&0}\mediumMATRIX{\ket{\phi'}\\[2\jot]0}
  \TAB\text{and}\TAB
  \mediumMATRIX{1\\&0}\tensor\mediumMATRIX{v(y)\\[2\jot]&0}\mediumMATRIX{\ket{\phi'}\\[2\jot]0}
\]
with matrix notation as in \cite[section~1]{FREI-CHSH}
and auxiliary environment
\[
  \Bigl(u(x)\tensor1\Bigr)\ket{\phi}\tensor\ket{\aux}
  \TAB\text{and}\TAB
  \Bigl(1\tensor v(y)\Bigr)\ket{\phi}\tensor\ket{\aux}.
\]
However as a drawback of this known self-testing result, either of above freedom in choice of representation revokes uniqueness of optimal states along higher moments as the following implication fails in general
\[
  u(x)\ket\phi=u'(x)\ket{\phi'}
  \TAB\NOT\implies\TAB
  u(x_1)\cdots u(x_n)\ket\phi=u'(x_1)\cdots u'(x_n)\ket{\phi'}
\]
while some partial result holds in certain cases as illustrated in \cite{PADDOCK-SLOFSTRA-ZHAO-ZHOU}.\\
Secondly the ideal strategy further arises by some partially entangled state
\[
  \ket\phi=\cos(\phi)\ket{0}\tensor\ket{0} +\sin(\phi)\ket1\tensor\ket1:\TAB\TAB
  \sin(2\phi)=\sqrt{\frac{1-\alpha^2\beta^2}{1+\beta^2}}\hspace{-1cm}
\]
with additional scaling on the second parameter for convenience as in \eqref{DEFINITION:TILTED-CHSH}.\\
The corresponding observables are thereby usually formulated as a particular representation written in the form
\begin{align*}
  u\tensor1=Z\tensor1,&\TAB\TAB\TAB 1\tensor u=1\tensor\Bigl(\cos(\mu) Z+\sin(\mu) X\Bigr)\hspace{-1cm}\\[2\jot]
  v\tensor1=X\tensor1,&\TAB\TAB\TAB 1\tensor v=1\tensor\Bigl(\cos(\mu) Z-\sin(\mu) X\Bigr)\hspace{-1cm}
\end{align*}
for some angle formulated as (for the case $\alpha=0$)
\[
  \tan(\mu)=\sqrt{\frac{1-\beta^2}{1+\beta^2}}
  \TAB\implies\TAB
  \mediumleft(~\cos(\mu)={?}~\medium|~\sin(\mu)={?}~\mediumright)\hspace{-1cm}
\]
as to be found in the original article \cite{ACIN-MASSAR-PIRONIO} as well as further in \cite{BAMPS-PIRONIO}\\
which entails an additional burden on unraveling the given formula.\\
Using these two phenomena led the authors to construct quantum spatial correlations that do not admit any finite dimensional representation.

While the separation from \cite{COLADANGELO-STARK} answers a long standing open problem,\linebreak
the authors of the current article believe this separation to be of larger more general operator algebraic phenomenon.
In order to solve this problem, the authors thus aimed as a crucial first task to get a deeper and better understanding on optimal states for tilted CHSH games from an operator algebraic perspective,
which defines the main motivation for the current work.

For this the authors employ a novel more systematic procedure for nonlocal games as initiated by the first named author in the introductory article \cite{FREI-CHSH}:\\
More precisely, the crucial ingredient arises from the canonical decomposition for states on operator algebras into a representation for the operator algebra followed by some vector state on the representation
\begin{TIKZCD}[column sep=large]
  \CSTAR(\twoplayer)\rar{\repr}& Q\subset B(H)\rar{\bra\phi\BLANK\ket\phi}&\COMPLEX
\end{TIKZCD}
with notation for the twoplayer algebra from \cite[section~2]{FREI-CHSH}.\\
This canonical decomposition is uniquely determined (and arising as their minimal Stinespring dilation) and may be seen in analogy with the canonical decomposition of morphisms into their epimorphisms followed by their\linebreak monomorphism (for more details confer \cite[section~1]{FREI-CHSH}).\\
As such the question on characterising optimal states splits into the first task of finding the common minimal quotient hosting the entire optimal state space for the given nonlocal game
\[
  \begin{tikzcd}[column sep=small]
    \CSTAR(\twoplayer)\rar&\ldots\rar& Q(\game):
  \end{tikzcd}
  \TAB\TAB
  S(\game)\subset SQ(\game)
\]
followed by the second task of characterising its optimal states as a subspace.\\
As a consequence one derives a general obstruction for optimal representations (given by the minimal quotient) in the form of polynomial relations such as
\begin{gather*}
  \begin{tikzcd}
    \pi:\CSTAR(\twoplayer)\rar& Q(\game)\rar& B(H):
  \end{tikzcd}\\[2\jot]
  \hspace{-1cm}\implies\TAB\pi\biggl(\sqrt{2}E(2|1)E(0|3)-5E(1|1)\biggr)=0
\end{gather*}
together with a characterisation of optimal states including all moments
\begin{gather*}
  \phi(1\tensor1)=1,\TAB \phi\Bigl(E(a|x)\tensor 1\Bigr),\TAB
  \phi\Bigl(1\tensor E(b|y)\Bigr),\TAB
  \phi\Bigl(E(a|x)\tensor E(b|y)\Bigr) \\[2\jot]
  \phi\Bigl(E(a|x)E(a'|x')\tensor 1\Bigr),\TAB
  \phi\Bigl(E(a|x)E(a'|x')\tensor E(b|y)\Bigr),\TAB\ldots
\end{gather*}
This novel procedure has been illustrated on a variety of examples in \cite{FREI-CHSH}.\\
In the current article we devote this general procedure entirely on the tilted CHSH games as a fully fletched example from start to finish.

For this we revisit a classical result from operator algebras on the univeral algebra generated by a pair of projections.
Remodeling said result in terms of generating unitaries (for the underlying group) in combination with induction on the trivial representation (as basically the canonical representation for Fell bundles in general) reveals a surprising description as the algebra generated by unitaries along their range of anticommutation:
\begin{INTROTHM}\label{INTRO-THEOREM:REPRESENTATIONS}
  The universal algebra generated by a pair of order--two unitaries admits the algebraic description as function algebra ranging over the amount of anticommutation for its generators
  \[
    \CSTAR\mediumleft(~u^2=1=v^2~\mediumright) =
    \CSTAR\Bigr(~u^2=1=v^2~\Big|~\{u,v\}=2\id~\Bigr) \subset
    \CONTINUOUS\mediumleft([-1,1]\to M_2\mediumright)
  \]
  for any choice of continuous generators $u,v\in\CONTINUOUS\mediumleft([-1,1],M_2\mediumright)$.
\end{INTROTHM}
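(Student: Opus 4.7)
The plan is to recognise the universal algebra $A := \CSTAR(u^2=1=v^2)$ as the full group $\CSTAR$-algebra of the infinite dihedral group $D_\infty = \INTEGERS_2 * \INTEGERS_2$ and then to exploit its semidirect product decomposition $D_\infty \cong \INTEGERS \crossed \INTEGERS_2$, where the copy of $\INTEGERS$ is generated by $w := uv$ and the outer $\INTEGERS_2$, generated by $u$, acts by inversion $uwu = w^{-1}$. This yields the crossed product presentation $A \cong C(\TORUS) \crossed_{z\mapsto\bar z} \INTEGERS_2$, so that the conclusion should fall out via the orbit map $z \mapsto \Realpart(z)\colon \TORUS \to [-1,1]$.

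First I would verify that $c := \frac{1}{2}\{u,v\} = \frac{1}{2}(w + w^{-1})$ is a central self-adjoint element of norm at most one: centrality reduces to the immediate identity $u(uv+vu) = v + uvu = (uv+vu)u$ together with its mirror for $v$, while $\|c\| \leq 1$ is automatic. To confirm that the spectrum actually fills $[-1,1]$ I would exhibit, for every $t \in [-1,1]$, a concrete two-dimensional representation
\[
u \mapsto \MATRIX{1 & 0 \\ 0 & -1}, \qquad v \mapsto \MATRIX{t & \sqrt{1-t^2} \\ \sqrt{1-t^2} & -t},
\]
whose image of $c$ equals $t \cdot I$. Bundling these together into a single map $A \to C([-1,1], M_2)$ then realises $c$ as precisely the identity function $\id$ on $[-1,1]$, which is the sense in which the central relation $\{u,v\} = 2\id$ is to be read.

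The remaining task is to identify the fibres of $A$ over $\spectrum(c) = [-1,1]$ and to show that this canonical evaluation map is injective. Each fibre $A/(c - t \cdot 1)$ is generated by two order-two unitaries whose anticommutator equals $2t$; for $t \in (-1,1)$ a short count shows that $\{1, u, v, uv\}$ already spans the fibre, while the representation above is surjective onto $M_2$, forcing the fibre to equal $M_2$. At the endpoints $t = \pm 1$ the relation $\{u,v\} = \pm 2$ combined with $u^2 = v^2 = 1$ gives $(u \mp v)^2 = 0$ and hence $u = \pm v$ accordingly, so the fibre collapses to $\COMPLEX \oplus \COMPLEX$ embedded diagonally in $M_2$ through any chosen continuous matrix generators. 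I expect the main obstacle to be injectivity of the section map; this I would handle by invoking amenability of $D_\infty$ (so full and reduced group $\CSTAR$-algebras coincide) together with the standard description of $C(X) \crossed G$ for $G$ finite acting on a compact Hausdorff space, which realises $A$ as the continuous sections of the described $\CSTAR$-bundle over $[-1,1]$ and gives both separation of points and the asserted embedding.
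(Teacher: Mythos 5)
Your proposal is correct and follows essentially the same route as the paper: both exploit the decomposition $\INTEGERS/2*\INTEGERS/2\cong\INTEGERS\rtimes\INTEGERS/2$ with $w=uv$, realise the algebra as the crossed product $\CONTINUOUS(\TORUS)\rtimes\INTEGERS/2$, use amenability of the group to obtain faithfulness, and read off $\tfrac{1}{2}\{u,v\}=\tfrac{1}{2}(w+w^{*})=\Realpart(w)$ as the coordinate on $[-1,1]$. The only difference is presentational: you construct the field of two-dimensional representations directly over $[-1,1]$ and identify the fibres (including the collapse to $\COMPLEX\oplus\COMPLEX$ at $\pm1$), whereas the paper first embeds into $\CONTINUOUS(\TORUS,M_2)$ via the induced (regular) representation and computes the anticommutator there.
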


Our investigations further reveal this relation as the underpinning of strategies from most of adjacent articles such as \cite{PAL-VERTESI} on the prominent I3322 inequality from \cite{FROISSART-I3322} which the authors elaborate in upcoming work.

As such the computation of quantum values for \binaryinput\ \binaryoutput\ games reduces to some optimisation problem within matrix algebras
\[
  \game(s,t)\in M_2\tensor M_2:\TAB\TAB\|\game\|=\sup_{-1\leq s,t\leq1}\|\game(s,t)\|
\]
along the amount of anticommutation between generators
\[
  M_2\tensor M_2=\CSTAR\Bigr(\{u,v\}=2s(\Alice)\Bigr)\tensor\CSTAR\Bigr(\{u,v\}=2s(\Bob)\Bigr).
\]
For the desired tilted CHSH games, we then derive using this method the familiar quantum value,
for which we however additionally reveal the precisely required amount of anticommutation between generators:
\begin{INTROTHM}The tilted CHSH games have maximal quantum value (in some precisely specified region as given in detail in theorem \ref{THEOREM:TILTED:QUANTUM-VALUE})
  \begin{gather*}
    \|\CHSH(\alpha,\beta)\|^2=4(1+\alpha^2)(1+\beta^2)
  \end{gather*}
only along the following amount of anticommutation between generators
\[
  \left(~s(\Alice)=0~\Big|~s(\Bob)=\frac{\alpha^2-1}{\alpha^2+1}(\beta^2+1) + \beta^2~\right).
\]
As such this fully characterises the obstruction on representations (see above).
\end{INTROTHM}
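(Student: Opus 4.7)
The plan is to apply Theorem~A to each of Alice's and Bob's algebras, thereby reducing the quantum value computation to an optimisation over $(s,t)\in[-1,1]^2$ of the operator norm of a concrete self-adjoint element of $M_2\tensor M_2$. As already sketched in the introduction one writes
\[
  \CHSH(\alpha,\beta)(s,t)\in M_2\tensor M_2,
  \TAB\TAB
  \|\CHSH(\alpha,\beta)\|=\sup_{-1\leq s,t\leq 1}\|\CHSH(\alpha,\beta)(s,t)\|,
\]
so that the entire problem reduces to a finite--dimensional matrix calculation together with an elementary optimisation on the square $[-1,1]^2$.

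First I would fix the canonical generators provided by Theorem~A, for instance $u=Z$ and $v(s)=sZ+\sqrt{1-s^{2}}\,X$ in each copy of $M_2$, and expand the tilted CHSH bias operator in terms of $u_{\Alice},v_{\Alice},u_{\Bob},v_{\Bob}$ together with its $\alpha,\beta$--weights. This yields an explicit real symmetric $4\times 4$ matrix $T(s,t)$, and its operator norm equals the absolute value of its largest eigenvalue.

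Next I would extract a closed form for $\|T(s,t)\|^{2}$. A useful observation is that after expanding in the Pauli basis the matrix $T(s,t)^{2}$ splits into manageable blocks, so that the two distinct eigenvalues of $T(s,t)^{2}$ appear as low--degree polynomials in $s,t,\alpha,\beta$. Taking the larger of the two then defines the objective to maximise. I would then pin down the maximiser by the usual first--order conditions: the $\partial_s$ equation should collapse to $s=0$ by the evenness of the dominant eigenvalue in $s$ around the origin, and the $\partial_t$ equation should become linear in $t$ with the solution stated in the theorem, namely $t=\frac{\alpha^{2}-1}{\alpha^{2}+1}(\beta^{2}+1)+\beta^{2}$. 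Plugging back in and simplifying should reproduce the quantum value $4(1+\alpha^{2})(1+\beta^{2})$.

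The main obstacle will be the second step: producing a sufficiently clean expression for $\|T(s,t)\|^{2}$ so that the optimisation of the third step actually goes through in closed form. Two secondary issues to handle are verifying that the asserted critical point lies inside $[-1,1]^{2}$ (which determines precisely the admissible region of parameters $(\alpha,\beta)$ referred to in the statement of theorem~\ref{THEOREM:TILTED:QUANTUM-VALUE}), and confirming that the critical point is the \emph{unique} maximum rather than merely a local one --- both of which should follow from the concavity of the resulting squared norm in the appropriate variables, once the closed form from the previous step is available.
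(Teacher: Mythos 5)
Your overall framework---reducing via Theorem~\ref{INTRO-THEOREM:REPRESENTATIONS} to an optimisation of $\|\CHSH(s,t)\|$ over $(s,t)\in[-1,1]^2$ inside $M_2\tensor M_2$---is exactly the paper's. But the two steps you yourself flag as the real work are where the sketch either stops or relies on reasoning that does not hold. For the closed form, the paper does not diagonalise a generic $4\times4$ matrix: it first multiplies the bias by the unitary $u\tensor1$, which collapses Alice's factor to a single scalar $z=uv\in\TORUS$; it then recombines Bob's generators as $u\pm v$, which anticommute and satisfy $(u\pm v)^2=2(1\pm s(\Bob))$; and only then squares, landing on an element of the form $d+au'+bv'+cw'$ for a mutually anticommuting triplet of selfadjoint unitaries, whose norm is $|d|+\sqrt{a^2+b^2+c^2}$. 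That specific algebraic mechanism is what makes $\|\CHSH(s,t)\|^2$ come out as $(\alpha x)^2+y^2+\beta^2+2\sqrt{(\alpha\beta x)^2+(\beta y)^2\cos^2\omega+(\alpha xy)^2\sin^2\omega}$; without it, your hoped-for ``manageable blocks'' of $T(s,t)^2$ is precisely the unresolved main obstacle.

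More seriously, the optimisation step is not sound as proposed. The objective depends on Alice's parameter only through $s(\Alice)^2=\cos^2\omega$ and is \emph{affine} in $\cos^2\omega$ under the square root, so $s(\Alice)=0$ is automatically a critical point by evenness---but whether it is a maximum or a minimum depends on the sign of that affine coefficient, and one must compare the two boundary cases $\cos^2\omega=0$ (yielding the quantum value) and $\cos^2\omega=1$ (yielding the local values $4(1+\beta)^2$ and $4(\alpha+\beta)^2$). That comparison is exactly what produces the phase condition $\alpha^2\geq 2\beta/(1+\beta^2)$ delimiting the region where the quantum value is actually attained; it does not follow from your check that the critical point lies inside $[-1,1]^2$, which only gives $|\alpha\beta|\leq1$. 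Finally, the objective is not jointly concave in the natural variables (the expression under the square root contains bilinear and $r(1-r)$ terms), so concavity cannot be invoked for uniqueness of the maximiser; the paper instead exhausts the boundary cases in $\omega$ and then solves a one-variable first-order condition in Bob's parameter, which is what forces $s(\Bob)=\frac{\alpha^2-1}{\alpha^2+1}(\beta^2+1)+\beta^2$ as the \emph{only} maximising anticommutation.
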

In particular this reveals a simplified formula for their generators:\\
More precisely we note that the freedom in the usual representation as above (or in more physical terms as a choice of reference frame)
\[
  1\tensor u=1\tensor\Bigl(\cos(\alpha) Z+\sin(\alpha) X\Bigr)
  \TAB\text{and}\TAB
  1\tensor v=1\tensor\Bigl(\cos(\beta) Z+\sin(\beta) X\Bigr)
\]
is after all uniquely characterised in terms of their relative angle
\[
  \{1\tensor u,1\tensor v\}= \cos(\alpha)\cos(\beta)-\sin(\alpha)\sin(\beta) = \cos|\alpha-\beta|=2s(\Bob)
\]
for which we thus delivered a convenient formulation by our result above.\\
This includes the usual representation as outlined at the beginning.
As such this solves the first task
given by the canonical decomposition of operator algebraic states
into representations and vector states.

In order to classify their optimal states we further provide the following result
(which certainly appears scattered throughout literature in similar forms):
\begin{INTROPROP}
  For any abritrary element (seen as a bias polynomial)
  and with operator norm (seen as its maximal or rather extremal bias)
  \[
    \bias\in\CONTINUOUS_0(X,A):\TAB\TAB \|\bias\|=\sup\|\bias(X)\|
  \]
  consider the corresponding short exact sequence
  \begin{TIKZCD}
    0\rar& \CONTINUOUS_0(\rest,A) \rar& \CONTINUOUS_0(X,A) \rar &\CONTINUOUS_0(\best,A)\rar& 0
  \end{TIKZCD}
  given by the defining subspace (and its complement)
  \[
    \best:=\Bigl\{~\|\bias(x)\|=\|\bias\|~\Bigr\}\subset X.
  \]
  Then any operator algebraic state $\phi:\CONTINUOUS_0(X,A)\to\COMPLEX$
  maximising above bias in the sense $|\phi(\bias)|=\|\bias\|$
  necessarily factors by the quotient
  \begin{TIKZCD}
    \phi:\CONTINUOUS_0(X,A)
    \rar&
    \CONTINUOUS_0(\best,A)
    \rar&
    \COMPLEX
  \end{TIKZCD}
  and as such the optimal state space arises as a subspace for the quotient.
\end{INTROPROP}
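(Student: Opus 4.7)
The plan is to show that every such state $\phi$ annihilates the ideal $I := \CONTINUOUS_0(\rest, A)$, so that by the universal property of the quotient it factors through $\CONTINUOUS_0(\best, A)$. After rotating $\bias$ by a unit scalar we may assume $\phi(\bias) = \|\bias\|$, and the chain
\[
\|\bias\|^2 = |\phi(\bias)|^2 \leq \phi(\bias^*\bias) \leq \|\bias^*\bias\| = \|\bias\|^2
\]
(Cauchy--Schwarz together with the norm of a state) lets us replace $\bias$ by the positive element $b := \bias^*\bias$, with $M := \|b\| = \phi(b) = \|\bias\|^2$ and unchanged $\best$ since $\|b(x)\| = \|\bias(x)\|^2$.

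First I would extract an eigen-type identity for $\phi$ against $b$. Since $0 \leq M - b \leq M$ in the unitisation, one has $(M-b)^2 \leq M(M-b)$, and hence $\hat\phi\bigl((M-b)^2\bigr) \leq M\hat\phi(M-b) = 0$. Cauchy--Schwarz then forces $\phi(cb) = M\phi(c)$ for every $c \in \CONTINUOUS_0(X,A)$. Iterating yields $\phi(cb^n) = M^n\phi(c)$, and uniform polynomial approximation on $[0,M]$ upgrades this to
\[
\phi\bigl(c \cdot f(b)\bigr) = f(M)\,\phi(c)
\]
for every $c \in \CONTINUOUS_0(X,A)$ and every $f \in \CONTINUOUS([0, M])$.

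The final step exploits this identity together with the geometry of $\best$. Given any $c \in \CONTINUOUS_0(\rest, A)$ with $c \geq 0$ and any $\delta > 0$, the set $K_\delta := \{x \in X : \|c(x)\| \geq \delta\}$ is a compact subset of the open set $\rest$ (since $c$ is $\CONTINUOUS_0$ and vanishes on $\best$); by continuity of $\|b(\cdot)\|$ and compactness of $K_\delta$, the supremum $M_\delta := \sup_{x \in K_\delta} \|b(x)\|$ is strictly less than $M$. Choosing $f \in \CONTINUOUS([0, M])$ with $f(M) = 1$ and $f \equiv 0$ on $[0, (M + M_\delta)/2]$ makes $f(b)$ vanish pointwise on $K_\delta$; on the complement one has $\|c(x)\| < \delta$, so $\|c \cdot f(b)\| \leq \delta$ and therefore $|\phi(c)| = |\phi(c \cdot f(b))| \leq \delta$. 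Letting $\delta \to 0$ gives $\phi(c) = 0$, and linearity extends this to all of $I$.

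The main technical obstacle I anticipate is the non-unitality of $\CONTINUOUS_0(X,A)$: every positivity manipulation must either take place in the unitisation or be mediated by an approximate unit, and one must observe that $c \cdot f(b)$ always lies in $\CONTINUOUS_0(X,A)$ even when $f(b)$ itself only lies in the unitisation. A secondary but essential subtlety is that the compactness of $\best$, which relies on the $\CONTINUOUS_0$ decay of $\bias$ at infinity, is what makes the strict inequality $M_\delta < M$ possible.
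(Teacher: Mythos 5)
Your proof is correct, and after the shared opening move (the Cauchy--Schwarz reduction $|\phi(\bias)|^2\leq\phi(\bias^*\bias)$ replacing $\bias$ by the positive element $b=\bias^*\bias$) it diverges genuinely from the paper's argument. The paper reduces the matrix-valued problem to a \emph{scalar} one: it passes to the pointwise norm function $\|\bias(\cdot)\|\in\CONTINUOUS_0(X)$, observes that an optimal state must also maximise this scalar function, shows the state kills the hereditary subalgebra generated by $\|\bias\|-\|\bias(\cdot)\|$, identifies that hereditary subalgebra with the ideal $\CONTINUOUS_0(\rest)$ via Stone--Weierstrass, and finally uses the multiplicative domain to propagate the vanishing from $\CONTINUOUS_0(\rest)\tensor 1$ to all of $\CONTINUOUS_0(\rest)\tensor A$. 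You instead stay with the operator-valued element throughout: your identity $\phi(c\,f(b))=f(M)\phi(c)$ (obtained from $\hat\phi((M-b)^2)=0$ plus Cauchy--Schwarz and polynomial approximation) is itself a multiplicative-domain statement for $b$, and the bump-function/compactness argument on $K_\delta$ then localises the state directly, with no detour through the abelian quotient or the tensor decomposition. Your route is somewhat more self-contained (no appeal to the hereditary-subalgebra-equals-ideal fact for abelian algebras) at the cost of the explicit functional calculus in the unitisation, which you correctly flag as the main technical point; the paper's route generalises more transparently to situations where one only controls the scalar norm function. Both correctly use that $\best$ is compact (equivalently that $M_\delta<M$ is strict), which is where the $\CONTINUOUS_0$ hypothesis enters.
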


With theorem \ref{INTRO-THEOREM:REPRESENTATIONS} in mind, we may thus in principle solve every \binaryinput\ \binaryoutput\ game. In particular for those with above quotient algebra given by a single parameter pair such as
\begin{gather*}
  \best=\left(~s(\Alice)=0~\Big|~s(\Bob)=\frac{\alpha^2-1}{\alpha^2+1}(\beta^2+1) + \beta^2~\right):\\[2\jot]
  \CSTAR\Bigr(\{u,v\}=2s(\Alice)\Bigr)\tensor\CSTAR\Bigr(\{u,v\}=2s(\Bob)\Bigr)= M_2\tensor M_2
\end{gather*}
the solution space follows as the convex closure by the eigenspace problem
\[
  \bias\in M_2\tensor M_2:\TAB\TAB \bias\ket\phi=\|\bias\|\cdot\ket\phi~?
\]
while more generally the solution space arises as continuous sections thereof.\\
With these reductions we finally classify the entire solution space (including all higher moments) for the tilted CHSH games as:

\begin{INTROTHM}
  The tilted CHSH games admit unique optimal states as uniquely determined by the correlation table
  (which further entails all higher moments using the anticommutation relation on generators):
  \[
    \def\arraystretch{1.5}
    \begin{array}[c]{c|c|c|c|c}
      \phi(\ldots) & \BLANK\tensor 1 & \BLANK\tensor 2u & \BLANK\tensor 2v & \BLANK\tensor 4uv \\
      \hline
      1\tensor\BLANK & 1=a^2+d^2 & (a^2-d^2)|w_+| & (a^2-d^2)|w_+| & (a^2+d^2)(|w_+|^2-|w_-|^2) \\
      \hline
      u\tensor\BLANK & a^2-d^2 & (a^2+d^2)|w_+| & (a^2+d^2)|w_+| & (a^2-d^2)(|w_+|^2-|w_-|^2) \\
      \hline
      v\tensor\BLANK & 0 & 2ad|w_-| & -2ad|w_-| & 0 \\
      \hline
      uv\tensor\BLANK & 0 & 0 & 0 & 4ad|w_+|\cdot|w_-|
    \end{array}
  \]
  The values are thereby given by the familiar transformation
  \[
    \hspace{-1cm}w_\pm=u\pm v:\TAB\TAB|w_+|=2\alpha\sqrt{\frac{1+\beta^2}{1+\alpha^2}}
    \TAB\text{and}\TAB
    |w_-|=2\sqrt{\frac{1-\alpha^2\beta^2}{1+\alpha^2}}
  \]
  as well as Schmidt coefficients given by
  \[
    a=\sqrt{\frac{\sqrt{1+\beta^2}+\beta\sqrt{1+\alpha^2}}{2\sqrt{1+\beta^2}}}
    \TAB\text{and}\TAB
    d=\sqrt{\frac{\sqrt{1+\beta^2}-\beta\sqrt{1+\alpha^2}}{2\sqrt{1+\beta^2}}}\hspace{-1.5cm}
  \]
  from which there is no ambiguity left anymore.\\
  For more details on above optimal state confer theorem \ref{THEOREM:TILTED:QUANTUM-STATE}.
\end{INTROTHM}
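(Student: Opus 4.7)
\textbf{The plan} is to combine Theorem~\ref{INTRO-THEOREM:REPRESENTATIONS} with the preceding factorisation proposition in order to reduce the classification of optimal states on $\CSTAR(\twoplayer)$ for the tilted CHSH game to a finite--dimensional eigenvalue problem on $M_2\tensor M_2$. Theorem~\ref{INTRO-THEOREM:REPRESENTATIONS} realises the two one--sided subalgebras as continuous sections over $[-1,1]$ with fibre $M_2$, parametrised by the anticommutation amount $s$; the ambient bipartite algebra is thus embedded inside $\CONTINUOUS([-1,1]^{2}\to M_2\tensor M_2)$ parametrised by $(s(\Alice),s(\Bob))$. The quantum value theorem (theorem~\ref{THEOREM:TILTED:QUANTUM-VALUE}) identifies a \emph{unique} pair at which the maximal bias is attained, and the factorisation proposition then forces every optimal state to descend to the single fibre at that point, thereby reducing the entire problem to finding the top eigenvector of one Hermitian operator on $\COMPLEX^{4}$.

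\textbf{Next} I would write the CHSH bias explicitly inside the extremal fibre: on Alice's side $\{u,v\}=0$ gives $u=Z$ and $v=X$ up to unitary equivalence, while on Bob's side the specified value of $s(\Bob)$ fixes $\|u\pm v\|^{2}=2\pm 2s(\Bob)$, and a short algebraic manipulation identifies these with $|w_\pm|^{2}$ as stated. The resulting $4\times 4$ Hermitian matrix then has operator norm $2\sqrt{(1+\alpha^{2})(1+\beta^{2})}$ matching the quantum value theorem. I would diagonalise it using the basis induced by $\{Z,X\}$ on Alice's side together with $\{u+v,u-v\}$ on Bob's side, which block--diagonalises the bias into two $2\times 2$ blocks (one for each sign). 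A short computation yields a normalised leading eigenvector that, after a Schmidt rotation, takes the form $a\ket{0}\tensor\ket{0}+d\ket{1}\tensor\ket{1}$ with the Schmidt coefficients $a,d$ as claimed (the algebraic identities $a^{2}+d^{2}=1$ and $(2ad)^{2}=(1-\alpha^{2}\beta^{2})/(1+\beta^{2})$ match the familiar self--testing formula from the introduction). Verifying that the leading eigenvalue is \emph{simple} is what delivers uniqueness of the optimal vector state.

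\textbf{Finally} the entries of the claimed correlation table follow as direct matrix elements $\bra\phi w\tensor w'\ket\phi$ against this eigenvector, using $u^{2}=v^{2}=1$ and the anticommutation relations to reduce arbitrary words in $u,v$ on each side to the four--dimensional basis $\{1,u,v,uv\}$. Uniqueness at the level of \emph{all} higher and mixed moments is then automatic: every such moment reduces linearly by the same rewriting to an entry of the tabulated correlation. The hardest step is the explicit diagonalisation of the fibre bias together with the verification that its top eigenspace is one--dimensional; once this is in place, all remaining computations are bookkeeping enabled by the algebraic reductions carried out in the earlier sections.
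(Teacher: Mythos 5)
Your proposal is correct and follows essentially the same route as the paper: reduce via the representation theorem and the maximising-quotient proposition to the single fibre $M_2\tensor M_2$ at the unique optimal anticommutation pair, block-diagonalise the bias into the two $2\times2$ invariant subspaces induced by $\{Z,X\}$ and $\{w_+,w_-\}$, verify that only one block attains the norm so the top eigenspace is one-dimensional, and read off the table entries and all higher moments by the anticommutation rewriting. The only cosmetic difference is that the paper's eigenvector is written directly in the joint eigenbasis of $u$ and $w_+/|w_+|$, so no separate Schmidt rotation is needed.
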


We note that this characterisation is much stronger than any previous self-testing result as it entails uniqueness for states including all higher and mixed moments (see the argument in the beginning of this introduction).\\
Furthermore, the usual form (as in the beginning)
\begin{gather*}
  \ket\phi=\cos(\phi)\ket{0}\tensor\ket{0} +\sin(\phi)\ket1\tensor\ket1
\end{gather*}
simply reads off as above Schmidt coefficients
\[
  \left(~~\cos(\phi)=\sqrt{\frac{\sqrt{1+\beta^2}+\beta\sqrt{1+\alpha^2}}{2\sqrt{1+\beta^2}}}
  ~\Big|~
  \sin(\phi)=\sqrt{\frac{\sqrt{1+\beta^2}-\beta\sqrt{1+\alpha^2}}{2\sqrt{1+\beta^2}}}~~\right).
\]
This compares precisely to the usual formulation using
\[
  \sin(2\phi)^2=4\sin^2(\phi)\cos^2(\phi)=\left(\sqrt{\frac{1-\alpha^2\beta^2}{1+\beta^2}}\right)^2
\]
which however merely defines an implicit formulation (see the beginning).\\
As such we further derived an explicit concrete formulation for their Schmidt coefficients with no necessity for any trigonmetric formulation.

Concluding this introduction,
our current work provides a systematic procedure for solving any \binaryinput\ \binaryoutput\ game,
and in particular illustrates this procedure as an entire solution for the tilted CHSH games:\\
This includes a precise characterisation for the entire region of quantum advantage (with additional phase separation which remained unnoticed) as well as the characterisation for its required amount of anticommutation as an obstruction on representations.
On the other hand this entails an entire classification on optimal states as a uniquely determined operator algebraic state (including all higher moments) and as such diminishes every ambiguity that remained from traditional self-testing results for the tilted CHSH games.

\section[Representation-theory]{Representation-theory:\texorpdfstring{\\}{  }an algebraic description}
\label{sec:REPRESENTATION-THEORY}

The beginning of our investigation starts with a classical result from operator algebras from a slightly different perspective (which matches astonishingly well with quantum information theoretic consideration).
More precisely, the classical result gives a concrete description for the universal operator algebra generated by a pair of projections as some matrix-valued function algebra
\[
  \CSTAR\Bigl(p^2=p=p^*,q^2=q=q^*\Bigr)\subset\CONTINUOUS\Bigl(\mathrm{some~space},M_2\Bigr).
\]
While those descriptions usually involve some rather concrete interpretations such as rotations (as suggested for instance in \cite[remark 1.4]{RAEBURN-SINCLAIR})
\[
  p=\mediumMATRIX{1\\0}\mediumMATRIX{1&0}\TAB\text{and}\TAB q=\mediumMATRIX{\cos\\\sin}\mediumMATRIX{\cos&\sin}
\]
they generally lack for some description using abstract algebraic relations.
Using instead the perspective of generating unitaries
we will provide such as the range of possible anticommutation
\[
  \{u,v\}=uv+vu=2s,\TAB\text{for}\TAB -1\leq s\leq1.
\]
For their corresponding projections however this anticommutation reads when written out as the meaningless relation
\[
  \{2p-1,2q-1\}=4(pq+qp) - 4(p+q) + 2=2s
\]
which may be the reason why above algebraic description remained overlooked.
Further, the previous methods on above identification as matrix-valued functions
usually involve some counting argument on the irreducible\linebreak representations
for the underlying group (as a semidirect product)
\[
  \INTEGERS/2*\INTEGERS/2=\INTEGERS\rtimes\INTEGERS/2.
\]
As the underlying group however arises as an amenable semidirect product,
induction on representations becomes the viable technique
and as such provides a canonical construction
(in fact the most canonical one on Fell bundles):

\begin{PROP}The semidirect product given by inversion
  \[
    \INTEGERS\rtimes\INTEGERS/2=\mediumleft\langle~w,s~\medium|~sws=w\inv~\text{and}~s^2=1~\mediumright\rangle
  \]
  defines the canonical embedding
  $\CSTAR\mediumleft(\INTEGERS\rtimes\INTEGERS/2\mediumright)\hookrightarrow \CSTAR(\INTEGERS)\tensor M(2)$
  \begin{gather*}
    w\rtimes1=\mediumMATRIX{w\\&w\inv}\TAB\text{and}\TAB1\rtimes s=\mediumMATRIX{&1\\1}
  \end{gather*}
  resembling the regular representation (up to some change of basis).
\end{PROP}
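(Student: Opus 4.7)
The plan is to construct the map explicitly on generators, verify it respects the defining relations so that it extends by the universal property of the full group $\CSTAR$-algebra, and then invoke amenability to conclude faithfulness.

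First, I would set
$$U:=\mediumMATRIX{w\\&w\inv} \qquad\text{and}\qquad V:=\mediumMATRIX{&1\\1&}$$
and verify that these satisfy the defining relations of $w\rtimes 1$ and $1\rtimes s$. The matrix $U$ is manifestly unitary, $V$ is a self-adjoint unitary of order two (so $V^2=1$), and the semidirect relation $sws\inv=w\inv$ translates into $VUV=U\inv$, which follows from the immediate computation that conjugation by the flip $V$ swaps the diagonal entries of $U$. By the universal property of $\CSTAR(\INTEGERS\rtimes\INTEGERS/2)$, this assignment extends to a $*$-homomorphism into $\CSTAR(\INTEGERS)\tensor M_2$.

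Next, I would identify this $*$-homomorphism as (a concrete matrix presentation of) the canonical regular representation of the crossed product, using the standard isomorphism $\CSTAR(\INTEGERS\rtimes\INTEGERS/2)\cong\CSTAR(\INTEGERS)\rtimes_\alpha\INTEGERS/2$ with $\alpha_s(w)=w\inv$. Indeed, on the Hilbert $\CSTAR(\INTEGERS)$-module $\CSTAR(\INTEGERS)\tensor\ell^2(\INTEGERS/2)\cong\CSTAR(\INTEGERS)\tensor\COMPLEX^2$ the regular representation sends $a\in\CSTAR(\INTEGERS)$ to $\diag(a,\alpha_s(a))$ and the element $s\in\INTEGERS/2$ to the permutation $\mediumMATRIX{&1\\1&}$, which is exactly the prescribed assignment.

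Finally, since $\INTEGERS$ is abelian (hence amenable) and $\INTEGERS/2$ is finite, the semidirect product $\INTEGERS\rtimes\INTEGERS/2$ is amenable. Consequently, the full and reduced group $\CSTAR$-algebras coincide, so the regular representation is faithful, yielding the claimed embedding.

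The main obstacle I anticipate is the middle step: one must pin down the basis of $\ell^2(\INTEGERS/2)$ so that the concrete matrix formulas are literally the regular representation of the crossed product (rather than some twisted or conjugated variant), so that the amenability argument applies cleanly. This is routine bookkeeping but needs to be laid out carefully to avoid sign or basis ambiguities.
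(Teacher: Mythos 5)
Your proposal is correct and follows essentially the same route as the paper: both identify the given matrix formulas as the regular representation of the crossed product $\CSTAR(\INTEGERS)\rtimes\INTEGERS/2$ on $\CSTAR(\INTEGERS)\tensor\ell^2(\INTEGERS/2)$ and then invoke amenability to conclude that this regular representation is faithful on the universal algebra. Your additional first step of verifying the relations and appealing to the universal property is harmless but redundant once the map is recognized as the canonical regular representation.
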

\begin{proof}We begin in the more general context of crossed products for group actions on operator algebras:
  For these recall that induction on representations alters the coefficient algebra as diagonal operators (while keeping the group acting as bilateral shift)
  \[
    A\rtimes G\to B\mediumleft(\ell^2G\tensor A\mediumright):\TAB
    1\rtimes g\mapsto 1\tensor g
    \TAB\text{and}\TAB
    a\rtimes1\mapsto\smallMATRIX{\ldots\\&g^*ag\\&&\ldots}.
  \]
  The tensor product here is more precisely given as a Hilbert right module,\\
  which basically resembles a skewed formulation for the regular representation.
  Further this comes equipped with a conditional expectation satisfying
  \[
    \bra{e}\BLANK\ket{e}: A\rtimes G\to A:\TAB\TAB \bra{e}\BLANK\rtimes e\ket{e}=\id_A(\BLANK)
    \TAB\text{and}\TAB
    \bra{e}\BLANK\rtimes(G\setminus e)\ket{e}=0
  \]
  which are precisely the relations defining the regular one $E: A\rtimes G\to A$.\\
  As such we have found nothing but an intermediate quotient
  \begin{TIKZCD}
    A\rtimes G \rar& \im(A\rtimes G) \rar& A\rtimes G\medium/\mediumleft(E:A\rtimes G\to A\mediumright)
  \end{TIKZCD}
  where we confer to
  \cite[section~2.1]{KWASNIEWSKI-EXEL-CROSSED} and \cite[section~3.1]{KWASNIEWSKI-MEYER-ESSENTIAL-CROSSED}\\
  in combination with \cite[section~1]{FREI-CHSH} and further \cite[section~17]{Exel-BOOK-2017}.\\
  In fact above representation resembles the far right regular quotient.\\
  For amenable groups (such as ours) these however all agree
  \begin{TIKZCD}
    A\rtimes G \rar[equal]& \im(A\rtimes G) \rar[equal]& A\rtimes G\medium/\mediumleft(E:A\rtimes G\to A\mediumright)
  \end{TIKZCD}
  and whence each defines a faithful representation for the universal one.\\
  Now for finite groups above representation may be already taken as
  \begin{TIKZCD}
    A\rtimes \mediumleft(F=\mathrm{finite}\mediumright)
    \rar[hookrightarrow]&
    B\mediumleft(\ell^2 F\mediumright)\tensor A
    \rar[hookrightarrow]&
    B\mediumleft(\ell^2F\tensor A\mediumright)
  \end{TIKZCD}
  which basically defines a matrix algebra over the given coefficient algebra.\\
  On the other hand for semidirect products both their group algebra and their crossed product define one and the same categorical object and whence altogher
  \begin{TIKZCD}
    \CSTAR(\INTEGERS\rtimes\INTEGERS/2)
    \rar[equal]&
    \CSTAR(\INTEGERS)\rtimes\INTEGERS/2
    \rar[hookrightarrow]&
    M(2)\tensor\CSTAR(\INTEGERS)
  \end{TIKZCD}
  whose generators read written out
  \[
    w\rtimes1=\mediumMATRIX{ewe\\&sws}=\mediumMATRIX{w\\&w\inv}\TAB\text{and}\TAB 1\rtimes s=\mediumMATRIX{&1\\1}.
  \]
  This gives the desired conclusion.
\end{proof}

We may now employ the previous result in case of our presentation above:
More precisely we recall the well-known isomorphism given for instance as (the particular choice will drop out in our final result below)
\begin{gather*}
  \INTEGERS/2*\INTEGERS/2 = \mediumleft\langle u^2=1=v^2\mediumright\rangle =
  \mediumleft\langle sws=w\inv,s^2=1\mediumright\rangle=\INTEGERS\rtimes\INTEGERS/2:\\[2\jot]
  \text{e.g.}\TAB(~u=ws,~v=s~)\iff (~w=uv,~s=v~).
\end{gather*}
In fact this isomorphism defines a particular instance of more generally freely generated subgroups of the form (as already arising in the I3322 inequality)
\begin{gather*}
  \mediumleft(\INTEGERS*\ldots*\INTEGERS\mediumright)=\mediumleft\langle~w_1=u_1v,~\ldots~,~w_n=u_nv~\mediumright\rangle
  \\[2\jot]\hspace{2cm}\subset
  \mediumleft\langle~u_1^2,\ldots,u_n^2=1,~v^2=1~\mediumright\rangle =
  \mediumleft(\INTEGERS/2*\ldots*\INTEGERS/2\mediumright)*\INTEGERS/2.
\end{gather*}
On the other hand we recall the basic identification as circle algebra
\[
  \CSTAR(\INTEGERS)=\CSTAR(w^*w=1=ww^*)=\CONTINUOUS(\TORUS):\TAB w(z)=z
\]
along with the familiar identification (defining an isomorphism)
\[
  \CONTINUOUS(\TORUS)\tensor A=\CONTINUOUS\mediumleft(\TORUS\to A\mediumright):\TAB (f\tensor a)(z)=f(z)\tensor a.
\]
Paired together we thus obtain the identification (so far with choice above)
\begin{gather*}
  \CSTAR(\INTEGERS/2*\INTEGERS/2)=\CSTAR(\INTEGERS)\rtimes\INTEGERS/2\subset\CONTINUOUS(\TORUS,M_2):\\[2\jot]
  u(z)=\mediumMATRIX{z\\&z^*}\mediumMATRIX{&1\\1}\TAB\text{and}\TAB v(z)=\mediumMATRIX{&1\\1}.
\end{gather*}
While this defines some isomorphism for an identification as a function algebra (up to its automorphism group)
this description does by no means define an intrinsic algebraic characterization.
Our goal is now to remedy this lack by identifying their defining algebraic relations.
For this we may now compute their anticommutation (using for instance our tentative choice above)
\begin{align*}
  \mediumleft\{u(z),v(z)\mediumright\} &=
  \mediumMATRIX{z\\&z^*}\cancel{\mediumMATRIX{&1\\1}}\cancel{\mediumMATRIX{&1\\1}} +
  \mediumMATRIX{&1\\1}\mediumMATRIX{z\\&z^*}\mediumMATRIX{&1\\1} \\[2\jot]
  &=\mediumMATRIX{z\\&z^*} + \mediumMATRIX{z^*\\&z} = 2\Realpart(z)\mediumMATRIX{1\\&1}.
\end{align*}
With these preliminary ideas in mind our desired characterisation reads:

\begin{THM}[Abstract characterisation]
  \label{THEOREM:REPRESENTATION-THEORY}
 Our identification above may be altogether characterised as an abstract algebra given by the algebraic relation (for any choice of continuous generators within the function algebra)
 \begin{align*}
   \CSTAR\mediumleft(~u^2=1=v^2~\mediumright)
   =\CSTAR\mediumleft(~u,v~\medium|~\{u,v\}=2\id~\mediumright)
   \subset\CONTINUOUS\mediumleft(~\Realpart(\TORUS)=[-1,1]\to M_2~\mediumright).
 \end{align*}
 In particular its irreducible representations may be parametrised by the range of above possible anticommutations (running over $-1\leq s\leq1$)
 \[
   M(2)=\CSTAR\mediumleft(~u,v~\medium|~\{u(s),v(s)\}=2s~\mediumright)
 \]
 with abstract matrix algebra similar as in \cite[appendix~A]{FREI-CHSH}.
\end{THM}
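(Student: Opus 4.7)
My plan is to combine the prior proposition with the explicit anticommutator calculation performed just before the statement. The proposition furnishes a faithful embedding $\CSTAR(\INTEGERS/2 * \INTEGERS/2) \hookrightarrow \CONTINUOUS(\TORUS, M_2)$ via induction from the trivial representation, and the displayed calculation shows $\{u(z), v(z)\} = 2\Realpart(z) \cdot I_2$, so the image of the anticommutator is central and factors through the quotient $\Realpart : \TORUS \to [-1, 1]$. The same centrality holds abstractly in the universal algebra by the direct check $u\{u,v\} = v + uvu = \{u,v\}u$ (and symmetrically for $v$); combined with the proposition's faithful embedding, this shows $\{u, v\}/2$ is a self-adjoint central element of spectrum exactly $[-1, 1]$, so the $\CSTAR$-subalgebra it generates in the centre is canonically $\CONTINUOUS([-1, 1])$.

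To realise the embedding into $\CONTINUOUS([-1, 1], M_2)$, I would pick a concrete continuous choice of generators such as $u(s) := Z$ and $v(s) := sZ + \sqrt{1 - s^2}\, X$ in the Pauli notation of the introduction. These manifestly satisfy $u^2 = v^2 = 1$ and $\{u, v\} = 2s$, so by the universal property they induce a $*$-homomorphism $\Phi : \CSTAR(u^2 = 1 = v^2) \to \CONTINUOUS([-1, 1], M_2)$. The abstract characterisation as $\CSTAR(u, v \mid \{u, v\} = 2\id)$ then follows because the additional relation only names the canonical central copy of $\CONTINUOUS([-1, 1])$ identified above, and imposes no further identification beyond what the abstract presentation already forces; independence of the particular continuous choice of generators reduces to an inner automorphism of the target function algebra at each fibre.

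For the injectivity of $\Phi$ I would classify the irreducible representations. Every irreducible $\pi$ sends the central element $\{u, v\}$ to a scalar $2s$ with $s \in [-1, 1]$: at each interior $s \in (-1, 1)$ the non-commuting order-two unitaries $u, v$ force $\pi$ to be a two-dimensional $M_2$-representation (cf.\ \cite[appendix~A]{FREI-CHSH}), while at the endpoints $s = \pm 1$ the relation $\{u, v\} = \pm 2$ forces $u = \pm v$ and produces two one-dimensional characters apiece. These irreducibles are precisely the fibre evaluations of $\Phi$, and since they jointly separate points (as already guaranteed by the proposition's embedding), the map $\Phi$ is faithful. The main obstacle will be handling the compatibility at the boundary points $s = \pm 1$, where the image of $\Phi$ lies in a proper subalgebra of $\CONTINUOUS([-1, 1], M_2)$ cut out by the degeneration $u(\pm 1) = \pm v(\pm 1)$; writing down this cut-out condition precisely, and gluing the four endpoint characters continuously with the generic $M_2$-fibres, is where the careful verification sits.
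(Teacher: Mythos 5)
Your proposal is correct and follows essentially the same route as the paper: both arguments rest on the faithful embedding into $\CONTINUOUS(\TORUS,M_2)$ from the preceding proposition, which shows that the fibre representations $\CSTAR\bigl(u,v\mid\{u,v\}=2s\bigr)$ for $s\in[-1,1]$ are jointly isometric, so that any continuous choice of generators realising all anticommutation values yields an isometric embedding. The paper's own proof is just this two-line reduction to the product $\prod_{-1\leq s\leq 1}\CSTAR\bigl(u,v\mid\{u,v\}=2s\bigr)$; your additional material (centrality of $\{u,v\}$, the explicit Pauli generators, the irreducible-representation count, and the endpoint degeneration $u=\pm v$) is correct supporting detail that the paper asserts but does not spell out.
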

\begin{proof}
  We wish to verify the following:
  Each such representation given by some choice of continuous generators satisfying above relations defines an isometric embedding which we may equivalently phrase as an embedding into
  \begin{TIKZCD}
    \CSTAR\mediumleft(~u^2=1=v^2~\mediumright)
    \rar[hookrightarrow]&
    \displaystyle\prod_{-1\leq s\leq 1}\CSTAR\mediumleft(~u,v~\medium|~\{u,v\}=2s~\mediumright)\TAB?
  \end{TIKZCD}
  We have however already found one such by our particular choice above,\\
  and as such also every such defines an isometric embedding as desired.
\end{proof}

Before we finish this section let us note how to retrieve usual Pauli matrices:
Given for instance some pair of the form
\[
  u=\cos(\phi)Z+ \sin(\phi)X,\TAB\TAB
  v=\cos(\psi)Z+ \sin(\psi)X
\]
then their anticommutation read off as
\[
  \{u,v\}= 2\cos(\phi)\cos(\psi)+2\sin(\phi)\sin(\psi)=2\cos(\phi-\psi)
\]
and so all that matters is their relative angle.
On the other hand given some pair with certain anticommutation we may separate their overlap as
\[
  \{u,v\}=2\cos(\delta)\TAB\implies\TAB\mediumleft\{u,v-\cos(\delta)u\mediumright\}=\{u,v\}-\cos(\delta)\{u,u\}=0
\]
and further pick to diagonalise either of the generators such as
\begin{gather*}
  u=\cos(0)Z+\sin(0)X=Z\TAB\implies\TAB
  v=\cos(\delta)Z+\sin(\delta)X.
\end{gather*}
We won't be needing however any of above Pauli matrices.
All what really matters is the amount of anticommutation between generators.
With this being said we now move on to nonlocal games.

\section[Binary--input binary--output games]{Binary--input binary--output games:\\general solution procedure}
With our previous characterisation on the representation theory we may now in principle solve any binary--input binary--output game:
More precisely any binary--input binary output game has as underlying twoplayer algebra (confer \cite[section~2]{FREI-CHSH})
\[
  \CSTAR(\twoplayer)=\CSTAR\mediumleft(u^2=1=v^2\mediumright)\tensor\CSTAR\mediumleft(u^2=1=v^2\mediumright)
\]
and may be characterised as a polynomial such as
\[
  \game= 2\: u\tensor1 - 1\tensor v + \sqrt{3}\: uvu\tensor v + 5\: v\tensor uvuv.
\]
Generally, the investigation of nonlocal games may be broken up into the following pair of tasks (each involving their own kind of special techniques):
(1) compute the operator norm for the given polynomial (given the polynomial has positive or symmetric spectrum)
\[
  \Bigl(\game\geq0:\TAB\|\game\|=?\Bigr)
  \TAB\text{or}\TAB
  \Bigl(\sigma(\bias)=-\sigma(\bias):\TAB \|\bias\|=?\Bigr)
\]
and (2) characterise the space of states maximising the above value
\[
  \phi:\CSTAR(\twoplayer)\to\COMPLEX:\TAB\TAB\phi(\game)=\|\game\|~?
\]
While either or both are generally hard to derive algorithmically (from a complexity point of view) and further are not independent in certain sense,
they may be still either or both solved for particular polynomials given in practice.

Now in order to solve the first task it usually becomes relevant to have profound knowledge on the representation theory for the underlying operator algebra.
For instance a good handle on its space of irreducible representations may be used to derive the operator norm for the given polynomial as an optimisation problem
\[
  \CSTAR(\twoplayer)\subset\prod_{\pi=\mathrm{irrep}}B(H_\pi):\TAB\TAB\|\game\|=\sup_{\pi=\irrep}\|\pi(\game)\|.
\]
Our previous section precisely provides such in the form (see theorem \ref{THEOREM:REPRESENTATION-THEORY})
\begin{align*}
  \CSTAR(\twoplayer) &=
  \CSTAR\mediumright(u^2=1=v^2\mediumright)
  \tensor
  \CSTAR\mediumright(u^2=1=v^2\mediumright)
  \\ &\subset
  \CSTAR\mediumleft(~u,v~\medium|~\{u,v\}=2\id~\mediumright)
  \tensor
  \CSTAR\mediumleft(~u,v~\medium|~\{u,v\}=2\id~\mediumright)
  \\ &\subset
  \CONTINUOUS\mediumleft([-1,1]\times[-1,1],M_2\tensor M_2\mediumright)
\end{align*}
(Note that the spatial and maximal tensor product coincide as both define amenable groups.)
As such we may in principle compute any operator norm for binary--input binary--output games as an optimisation problem along the anticommutation between generators for either player in the sense
\[
  \|\game\|=\sup_{-1\leq s,t\leq1}
  \Bigl\{~\mediumleft\|\game(s,t)\mediumright\|~:~\{u\tensor1,v\tensor1\}=2s,\{u\tensor1,v\tensor1\}=2t~\Bigr\}.
\]
Moreover since their generators anticommute up to scalar constant each monomial reduces to words of at most order--two such as
\[
  \{u,v\}=2s\TAB\implies\TAB vu = -uv + 2s,\TAB\TAB uvu = - v + 2su,\TAB\ldots
\]
and as such every polynomial (in respective matrix algebra) reduces to some linear expression of the form
\begin{gather*}
  \game(s,t) \in \LINEARSPAN\mediumleft\{1,u,v,uv\mediumright\}\tensor\mediumleft\{1,u,v,uv\mediumright\}:\\[2\jot]
  \game(s,t)= \lambda(1,1)1\tensor1 + \lambda(1,u) 1\tensor u + \ldots +\lambda(uv,uv)\tensor uv.
\end{gather*}
One may further (if one feels more comfortable)
replace each abstract generator as a Pauli matrix as outlined in the previous section.
As such the first task reads (for every pair of anticommutation) as an eigenvalue problem within
\begin{gather*}
  \CSTAR\mediumleft(~u,v~\medium|~\{u,v\}=2s~\mediumright)
  \tensor
  \CSTAR\mediumleft(~u,v~\medium|~\{u,v\}=2t~\mediumright)
  = M_2\tensor M_2:\\[2\jot]
  \mediumleft\|\game(s,t)\mediumright\| =
  \mediumleft\|\spectrum\mediumleft(\game(s,t)\mediumright)\mediumright\| =
  r\mediumleft(\game(s,t)\mediumright) =~?
\end{gather*}
which one may solve in principle by hand and from which one may in principle solve the optimisation problem along their anticommutation as above.

In order to solve task (2) to characterise the space of optimal states, one may now invoke the previous knowledge from the optimisation problem:
More precisely consider any state extended by Hahn-Banach to the ambient function algebra
\begin{TIKZCD}[column sep=3cm,row sep=small]
  \CONTINUOUS\mediumleft([-1,1]\times[-1,1]\to M_2\tensor M_2\mediumright)
  \drar[dashed,start anchor=east]\\
  \CSTAR\mediumright(u^2=1=v^2\mediumright)
  \tensor
  \CSTAR\mediumright(u^2=1=v^2\mediumright)
  \uar\rar&
  \COMPLEX.
\end{TIKZCD}
On the other hand, while optimising along the anticommutation we additionally infered those pairs maximising the operator norm
\[
  \best:=\Bigl\{~(s,t)~:~\|\game(s,t)\|=\|\game\|~\Bigr\}\subset [-1,1]\times [-1,1].
\]
As such only those states descending on this maximising subspace may be possibly maximising the operator norm
\begin{TIKZCD}
  \CONTINUOUS\mediumleft([-1,1]\times[-1,1]\to M_2\tensor M_2\mediumright) \rar&
  \CONTINUOUS\mediumleft(\best\to M_2\tensor M_2\mediumright)\rar[dashed]&
  \COMPLEX
\end{TIKZCD}
where these need to further arise as some mixture by the maximal eigenspace.\\
So far about the procedure for solving binary--input binary--output games.

\section{Real polynomials: binary games}

Before continuing we introduce a simplified translation for associating binary games for polynomials of order--two with real possible negative coefficients:
\begin{gather*}
  \bias \in\sum\REALS\cdot\{1,u_1,\ldots,u_n\}\tensor\{1,u_1,\ldots,u_n\}:\\[2\jot]
  \bias = \lambda(1,1)1\tensor1 + \lambda(1,u_1) 1\tensor u_1 + \ldots +\lambda(u_n,u_n) u_n\tensor u_n.
\end{gather*}
For instance the upcoming tilted CHSH games are given as a real polynomial
\[
  \CHSH(\alpha,\beta)=\alpha u\tensor (u+v) + v\tensor (u-v) + 2\beta u\tensor1
\]
which we basically view as a bias polynomial more precisely as follows:\\
For this we first note that any XOR-game may be equivalently described in terms of its bias polynomial (see for instance \cite[section~5.3]{CLEVE-HOYER-TONER-WATROUS})
\[
  \game=\frac{1}{|X\times Y|}\sum E(\winning)
  \TAB\implies\TAB
  \bias=\sum(-1)^{V(x,y)}u(x)\tensor u(y)
\]
where we have used the shorthand notation as in \cite[appendix~B]{FREI-CHSH}.\\
While this defines a pretty much standard procedure, we now describe a simplified procedure for associating nonlocal games to any bias polynomial with real possibly negative coefficients (in a somewhat canonical way):\\
For this we begin with the observation that tensor products of order--two unitaries remain order--two
whose induced spectral decomposition reads for
\begin{gather*}
  \mediumleft(~u = p-q~\mediumright)\tensor1\TAB\text{and}\TAB 1\tensor \mediumleft(~u' = p'-q'~\mediumright):\\[3\jot]
  \implies u\tensor u'= \mediumleft(~p\tensor p' + q\tensor q'~\mediumright) - \mediumleft(~p\tensor q' + q\tensor p'~\mediumright).
\end{gather*}
As such its relations for recoving spectral projections read
\begin{align*}
  \mediumleft(~1\tensor 1 + u\tensor u'~\mediumright)
  = 2\:\mediumleft(~p\tensor p' + q\tensor q'~\mediumright),\hspace{-1cm}\\[2\jot]
  \mediumleft(~1\tensor 1 - u\tensor u'~\mediumright)
  = 2\:\mediumleft(~p\tensor q' + q\tensor p'~\mediumright).\hspace{-1cm}
\end{align*}
Further one may associate for the degenerate case
\[
  \mediumleft(~1\tensor 1 + u\tensor 1~\mediumright) = 2p\tensor\mediumleft(~p'+q'~\mediumright)
  \TAB\text{and}\TAB
  \mediumleft(~1\tensor 1 - u\tensor 1~\mediumright) = 2q\tensor\mediumleft(~p'+q'~\mediumright)
\]
and similarly for the second tensor factor.\\
On the other hand the spectrum satisfies the obvious relations
\[
  \sigma(a+1)=\sigma(a)+1\TAB\text{and}\TAB\sigma(\lambda a)=\lambda(a).
\]
We may thus combine these two observations to associate binary games
to any order--two polynomial with real coefficients
using the following procedure:

\begin{BREAKPROP}[Real polynomials: binary game]
  Given a polynomial with real possibly mixed coefficients
  \begin{gather*}
    \bias \in\sum\REALS\cdot\{1,u_1,\ldots,u_n\}\tensor\{u_0=1,u_1,\ldots,u_n\}:\\[2\jot]
    \bias = \lambda(1,1)1\tensor1 + \lambda(u_0,u_1) 1\tensor u_1 + \ldots +\lambda(u_n,u_n) u_n\tensor u_n
  \end{gather*}
  one may associate a nonlocal game given by (up to normalisation)
  \begin{gather*}
    \game \sim \bias + \sum \Bigl|\lambda\mediumleft(\{1,u_1,\ldots,u_n\}\times\{u_0=1,u_1,\ldots,u_n\}\mediumright)\Bigr|
  \end{gather*}
  with however generally non-uniform distribution over questions.\\
  Furthermore the constant term above may be generally neglected.
\end{BREAKPROP}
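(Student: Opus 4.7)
The plan is to establish the statement by a direct term-by-term spectral decomposition, using the identities recorded in the paragraph preceding the proposition. For each non-constant monomial $\lambda(u_i,u_j)\,u_i\tensor u_j$ appearing in $\bias$ I would factor out the sign as $\lambda = \mathrm{sign}(\lambda)\cdot|\lambda|$ and invoke the identity
\[
  \lambda(u_i,u_j)\,u_i\tensor u_j + |\lambda(u_i,u_j)| = 2|\lambda(u_i,u_j)|\cdot P(i,j),
\]
where $P(i,j) := \tfrac12\bigl(1\tensor 1 + \mathrm{sign}(\lambda(u_i,u_j))\,u_i\tensor u_j\bigr)$. By the spectral formulas recalled just before the statement, $P(i,j)$ is precisely one of the two spectral projections of the order-two unitary $u_i\tensor u_j$ (and reduces to a projection tensored with the identity on the other side in the degenerate cases $i=0$ or $j=0$). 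Thus the sign of the coefficient only dictates which of the two available effects is assigned as the ``winning'' outcome for the question $(i,j)$.

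Summing these identities across all non-constant pairs and carrying the terms $\lambda(1,1)\,1\tensor1$ and $|\lambda(1,1)|$ through as well, I obtain
\[
  \bias + \sum |\lambda| = 2\sum_{(i,j)}|\lambda(u_i,u_j)|\cdot P(i,j) + \lambda(1,1)\,1\tensor 1,
\]
whose right-hand side is a nonnegative combination of effects. Up to the overall normalisation $2\sum|\lambda|$ this is precisely the expected winning observable of a nonlocal game in which the question pair $(u_i,u_j)$ is sampled with probability $|\lambda(u_i,u_j)|/\sum|\lambda|$ and the associated winning effect is the projection $P(i,j)$; in general this distribution over questions is non-uniform, since the weights $|\lambda(u_i,u_j)|$ need not all coincide.

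The remaining contribution $\lambda(1,1)\,1\tensor 1$ merely shifts every state evaluation by the same real number and therefore plays no role in the maximisation problem (nor in the minimal quotient hosting the optimal state space, in the sense of the introduction), so it may be dropped when passing from $\bias$ to the associated game. I expect no real obstacle: the whole argument reduces to careful sign-tracking once the elementary spectral identities preceding the statement have been recorded, and the only mildly delicate point is verifying that each monomial on the left really does convert into a non-negative multiple of an honest spectral projection on the right, including the degenerate $i=0$ or $j=0$ cases.
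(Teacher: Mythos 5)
Your argument is correct and follows essentially the same route as the paper: the paper's ``proof'' is precisely the preceding discussion of the spectral identities $1\tensor1\pm u\tensor u'=2(\ldots)$ together with their degenerate $i=0$ or $j=0$ variants, and your term-by-term identity $\lambda\,u_i\tensor u_j+|\lambda|=2|\lambda|\,P(i,j)$ with the non-uniform weights $|\lambda|/\sum|\lambda|$ is exactly the intended reading. The only cosmetic remark is a small bookkeeping ambiguity over whether $|\lambda(1,1)|$ is included in the normalising sum, which is immaterial since the constant term is discarded anyway.
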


For illustration let us consider the following polynomial
\[
  \bias = \frac{1}{2}~u(x=1)\tensor u(y=2) - \sqrt{2}~u(x=3)\tensor1.
\]
Applying above procedure we obtain the following binary game (up to some normalisation on the distribution of questions)
\begin{gather*}
  \hspace{0.5cm}u(\ldots)=E(+1|\ldots)-E(-1|\ldots):\hspace{1cm}
  \game\sim\bias + \left(\frac{1}{2} + \sqrt{2}\right) =\\[2\jot]
  \begin{aligned}
    =\frac{1}{2}\biggl(E(+1|x=1)\tensor E(+1|y=2) + E(-1|x=1)\tensor E(-1|y=2)\biggr) \\ +
    \sqrt{2}\biggl( E(+1|x=1) + E(-1|x=1)\biggr)\tensor \biggl(E(+1|\mathrm{any})+ E(-1|\mathrm{any})\biggr)
  \end{aligned}
\end{gather*}
where the latter expression may be taken as some question pair
for which the validity of the given answer pair is independent from the second player.\\
As another example we have any signed game as for example those exhibiting a separation between the quantum and quantum commuting value from \cite{DYKEMA-PAULSEN-PRAKASH}.
So any real bias polynomial may be taken as a binary game.

\section{Tilted CHSH games: quantum value}
\label{sec:QUANTUM-VALUE}

We now illustrate our solution procedure on the tilted CHSH games:\\
We begin for this with the computation of the quantum value using our machinery as introduced in the previous section:
For this we recall the bias polynomial for the tilted CHSH game from \cite{ACIN-MASSAR-PIRONIO}:
\begin{equation}\label{DEFINITION:TILTED-CHSH}
  \CHSH\mediumleft(\alpha,\beta\in\REALS\mediumright)= \alpha u\tensor(u+v) + v\tensor(u-v) + 2\beta u\tensor1.
\end{equation}
For simplicity of our upcoming formula we have taken a rescaled parameter (note the scaling by a factor two in the final expression). We further usually drop the dependency on the given fixed parameter pair $\alpha,\beta\in\REALS$.

As a first instance we wish to verify that the given polynomial has symmetric spectrum in order to apply our previous machinery:
For this we note that there exists plenty of automorphisms retrieving the negative polynomial, such as for instance by sending the generators for Alice to their negative
\[
  \Phi(\CHSH)=\alpha(-u)\tensor(u+v) + (-v)\tensor(u-v) + 2\beta(-u)\tensor1=-\CHSH
\]
and as such its spectrum is necessarily symmetric as
\[
  \spectrum(\CHSH)=\spectrum\mediumleft(\Phi(\CHSH)\mediumright)=-\spectrum(\CHSH).
\]
In particular we may retrieve the maximal bias as its operator norm
\[
  \|\CHSH\|=\|\sigma(\CHSH)\|=\max\sigma(\CHSH)
\]
and so we may apply our previous machinery as desired.\\
We meanwhile note that in a recent discussion the authors have found an argument for the asymmetry in case of the \mbox{I3322 inequality} of the form
\[
  \spectrum(\mathrm{I3322})\neq -\spectrum(\mathrm{I3322}):\TAB\TAB |\min\spectrum(\mathrm{I3322})|\neq |\max\spectrum(\mathrm{I3322})|
\]
and as such one needs to remain cautious when using above considerations.\\
Further similar transformations reveal the symmetries
\[
  \|\CHSH(-\alpha,\beta)\|=\|\CHSH(\alpha,\beta)\|=\|\CHSH(\alpha,-\beta)\|
\]
and as such we may restrict our attention to $\alpha,\beta\geq0$.\\
With this being said we may now compute its operator norm:

\begin{THM}\label{THEOREM:TILTED:QUANTUM-VALUE}
  The tilted CHSH games have as maximal values ($\alpha,\beta\geq0$)
  \begin{align}
    \Bigl\|\CHSH\mediumleft(~s(\Alice)=0~\medium|~s(\Bob)=\eqref{FORMULA}~\mediumright)\Bigr\|^2 &= 4(1+\alpha^2)(1+\beta^2)
    \label{QUANTUM-VALUE}\tag{Q}\\[2\jot]
      \Bigl\|\CHSH\mediumleft(~s(\Alice)=\pm1~\medium|~s(\Bob)=-1~\mediumright)\Bigr\|^2  &= 4(1+\beta)^2
      \label{LOCAL-VALUE-1}\tag{L1} \\[2\jot]
      \Bigl\|\CHSH\mediumleft(~s(\Alice)=\mathrm{any}~\medium|~s(\Bob)=1~\mediumright)\Bigr\|^2 &= 4(\alpha+\beta)^2
      \label{LOCAL-VALUE-alpha}\tag{La}
  \end{align}
  with optimisation parameter given by the amount of anticommutation
  \[
    \{u\tensor 1,v\tensor 1\}=2s(\Alice)\TAB\text{and}\TAB \{1\tensor u,1\tensor v\}=2s(\Bob)
  \]
  and with first value valid only within the range $|\alpha\beta|\leq1{:}$
  \[
    \hspace{-1.5cm}\eqref{QUANTUM-VALUE}:\TAB
    -1\leq \left(s(\Bob)=\frac{\alpha^2-1}{\alpha^2+1}(\beta^2+1) + \beta^2\right)\leq 1\label{FORMULA}\tag{$\ast$}
  \]
  Furthermore the values above become maximal within the region
  \[
    \begin{gathered}
      \vphantom{\frac{2\beta}{1+\beta^2}}
      \eqref{LOCAL-VALUE-1}\leq\eqref{QUANTUM-VALUE}
      \vphantom{\frac{2\beta}{1+\beta^2}}\\[2\jot]
      \eqref{LOCAL-VALUE-alpha}=\eqref{QUANTUM-VALUE}
    \end{gathered}
    \TAB
    \begin{gathered}
      \vphantom{\frac{2\beta}{1+\beta^2}}\iff\vphantom{\frac{2\beta}{1+\beta^2}}\\[2\jot]
      \iff
    \end{gathered}
    \TAB
    \begin{gathered}
      \alpha^2\geq\frac{2\beta}{1+\beta^2}\\[2\jot]
      |\alpha\beta|=1
    \end{gathered}
  \]
  where the latter local value always lies below the quantum value\\
  whereas their equality coincides precisely with the valid range for \eqref{FORMULA}.
\end{THM}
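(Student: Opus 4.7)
My plan is to invoke theorem \ref{THEOREM:REPRESENTATION-THEORY} to embed the two-player algebra as continuous $M_2\tensor M_2$-valued functions on $[-1,1]^2$, so that
\[
  \|\CHSH(\alpha,\beta)\|^2 \;=\; \sup_{(s,t)\in[-1,1]^2}\|\CHSH(s,t)\|^2
\]
is an $M_4$-norm in each fibre. Using the symmetry of the spectrum (already verified above), this equals the top spectral value of $\CHSH(s,t)^2$, which I would aim to write in closed form and then optimise over the square $[-1,1]^2$.

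To exploit the anticommutation structure I would factor $\CHSH = X_1 P + X_2 Q$ with $X_1 = u\tensor 1$, $X_2 = v\tensor 1$, $P = \alpha(1\tensor u + 1\tensor v) + 2\beta$ and $Q = 1\tensor u - 1\tensor v$. The relations $X_i^2 = 1$ and $\{X_1, X_2\} = 2s$ give
\[
  \CHSH^2 \;=\; P^2 + Q^2 + s\{P, Q\} + \tfrac{1}{2}[X_1, X_2][P, Q].
\]
A short calculation using $Y_i^2 = 1$ and $\{Y_1, Y_2\} = 2t$ yields $\{P, Q\} = 4\beta Q$ (the cross terms cancel against $Y_i^2 = 1$) and $[P, Q] = -2\alpha [Y_1, Y_2]$. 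Assembling, and writing $[u,v] = 2i\sqrt{1-s^2}\,S_A$ on Alice's side and similarly for Bob for the normalised Pauli-$y$ analogues, I obtain
\[
  \CHSH(s,t)^2 \;=\; c(t)\cdot 1 \;+\; 4\beta\,(1\tensor W) \;+\; 4\alpha\sqrt{(1-s^2)(1-t^2)}\,S_A\tensor S_B,
\]
with $c(t) = 2 + 2\alpha^2 + 2(\alpha^2 - 1)t + 4\beta^2$ and $W = (\alpha + s)Y_1 + (\alpha - s)Y_2$ of scalar square $\|W\|^2 = 2\alpha^2(1 + t) + 2s^2(1 - t)$.

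The crucial observation is that $S_B$ anticommutes with both $Y_1$ and $Y_2$ on Bob's side (Pauli orthogonality), hence with $W$; consequently $1\tensor W$ and $S_A\tensor S_B$ anticommute in $M_2 \tensor M_2$, their combined square is scalar, and the spectral radius collapses to the Pythagorean formula
\[
  \|\CHSH(s,t)\|^2 \;=\; c(t) \;+\; 4\sqrt{\beta^2\bigl[2\alpha^2(1+t) + 2s^2(1-t)\bigr] + \alpha^2(1-s^2)(1-t^2)}.
\]
This converts what would have been a tedious $4\times 4$ eigenvalue problem into transparent one-variable calculus in each of $s$ and $t$.

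For the final optimisation over $(s, t) \in [-1, 1]^2$, the $s^2$-coefficient inside the radical factors as $(1-t)[2\beta^2 - \alpha^2(1 + t)]$, so the $s$-maximum sits at $s = 0$ whenever $\alpha^2(1 + t) \geq 2\beta^2$ and at $s^2 = 1$ otherwise. Setting $s = 0$ and differentiating in $u = 1 + t$ produces the unique interior critical point $u^* = 2\alpha^2(\beta^2 + 1)/(\alpha^2 + 1)$, lying in $[0, 2]$ precisely when $\alpha^2\beta^2 \leq 1$; substitution recovers \eqref{QUANTUM-VALUE}. At the boundary strata $s^2 = 1$ and $t = \pm 1$, the anticommutations force $u = \pm v$ on Alice or $Y_1 = \pm Y_2$ on Bob, reducing $\CHSH$ to $2u \tensor (Y_1 + \beta)$ or $2u \tensor (\alpha Y_1 + \beta)$, from which \eqref{LOCAL-VALUE-1} and \eqref{LOCAL-VALUE-alpha} fall out immediately. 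Finally, the regime comparisons $\eqref{LOCAL-VALUE-alpha} = \eqref{QUANTUM-VALUE} \iff |\alpha\beta| = 1$ (equivalent to $(\alpha\beta - 1)^2 = 0$, which also yields $\eqref{LOCAL-VALUE-alpha} \leq \eqref{QUANTUM-VALUE}$ unconditionally) and $\eqref{LOCAL-VALUE-1} \leq \eqref{QUANTUM-VALUE} \iff \alpha^2 \geq 2\beta/(1 + \beta^2)$ reduce to elementary algebraic checks. The main obstacle throughout is spotting the anticommutation of the two non-scalar summands of $\CHSH^2$: once this structural observation is in place the rest of the argument is routine calculus and boundary bookkeeping.
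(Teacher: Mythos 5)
Your argument is correct and arrives at exactly the same closed-form fibrewise norm as the paper, but via a genuinely different squaring device. The paper first left-multiplies by the unitary $u\tensor 1$, so that Alice's pair collapses to the single unitary $z=uv$ (scalar in each irreducible fibre, with $\Realpart z=s$); after the substitution $2xu'=u+v$, $2yv'=u-v$ on Bob's side, $\tfrac14\CHSH^2$ becomes a scalar plus a real combination of the mutually anticommuting triple $u',v',iu'v'$ inside Bob's single copy of $M_2$, and the norm is read off Pythagorean-style. You instead square directly in $M_2\tensor M_2$, split $X_1X_2=s+\tfrac12[X_1,X_2]$, and observe that the two non-scalar summands $1\tensor W$ and $S_A\tensor S_B$ anticommute --- the same Pythagorean collapse, one tensor level up. I verified that your constant $c(t)$ and radicand $\beta^2\bigl[2\alpha^2(1+t)+2s^2(1-t)\bigr]+\alpha^2(1-s^2)(1-t^2)$ coincide with the paper's expression $(\alpha x)^2+y^2+\beta^2+2\sqrt{(\alpha\beta x)^2+(\beta y)^2\cos^2(\omega)+(\alpha xy)^2\sin^2(\omega)}$ under $x^2=(1+t)/2$, $\cos(\omega)=s$, and that your critical point $u^*=2\alpha^2(\beta^2+1)/(\alpha^2+1)$ reproduces the paper's $x^2=\alpha^2(1+\beta^2)/(1+\alpha^2)$, hence the anticommutation \eqref{FORMULA}; the boundary strata and the regime comparisons likewise match the appendix. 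What the paper's route buys is that everything is pushed into a single matrix factor (an abstract anticommuting Pauli triple), which is the template reused for the optimal-state analysis in theorem \ref{THEOREM:TILTED:QUANTUM-STATE}; what yours buys is that the linearity of the radicand in $s^2$ --- hence the dichotomy $s=0$ versus $s^2=1$ and the location of the quantum point at $s(\Alice)=0$ --- is visible with no change of variables. Two small points deserve a sentence in a final write-up: that the positive root of the anticommuting part $R$ is actually attained in its spectrum (it is, since $R$ is traceless in each fibre; the paper glosses this identically), and that $S_A,S_B$ are undefined at the degenerate fibres $s=\pm1$, $t=\pm1$, which you correctly treat as separate boundary cases.
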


Before we begin with the proof let us make a few observations:\\
First the region exhibiting a quantum advantage is bounded on one hand by the valid range for the amount of anticommutation given in \eqref{FORMULA}
\[
  \left(s(\Bob)=\frac{\alpha^2-1}{\alpha^2+1}(\beta^2+1) + \beta^2\right)\leq 1
  \TAB\iff\TAB
  \alpha^2\beta^2\leq1
\]
with lower bound always valid and with extreme case given by
\[
  \hspace{-0.5cm}-1=\left(s(\Bob)=\frac{\alpha^2-1}{\alpha^2+1}(\beta^2+1) + \beta^2\right)
  \TAB\iff\TAB
  \alpha=0
\]
while on the other hand by the domain when the quantum value is maximal
\[
  \eqref{LOCAL-VALUE-1}\leq\eqref{QUANTUM-VALUE}
  \TAB\iff\TAB
  \alpha^2\geq\frac{2\beta}{1+\beta^2}
\]
while noting that the other classical value always lies below.\\
Schematically the enclosed region looks like:
\begin{center}
\begin{tikzpicture}[scale=1.6,variable=\x,samples=200,smooth,draw=gray,fill=lightgray!25]
  \fill
    plot[domain=0:1] ({2*\x/(1+\x^2))},\x^2) --
    plot[domain=1:3] (\x,1/\x) --
    plot[domain=3:1] (\x,-1/\x) --
    plot[domain=1:0] ({2*\x/(1+\x^2)},-\x^2);
  \fill
    plot[domain=0:1] (-{2*\x/(1+\x^2))},\x^2) --
    plot[domain=1:3] (-\x,1/\x) --
    plot[domain=3:1] (-\x,-1/\x) --
    plot[domain=1:0] (-{2*\x/(1+\x^2)},-\x^2);
  \begin{scope}[domain=0:sqrt(2)]
    \draw node[anchor=south]{\scriptsize\eqref{LOCAL-VALUE-1}=\eqref{QUANTUM-VALUE}} plot (+{2*\x/(1+\x^2)},+\x^2);
    \draw plot (-{2*\x/(1+\x^2)},+\x^2);
    \draw plot (+{2*\x/(1+\x^2)},-\x^2);
    \draw plot (-{2*\x/(1+\x^2)},-\x^2);
  \end{scope}
  \begin{scope}[domain=1/2:3]
    \draw plot (+\x,+1/\x) node[anchor=south,shift={(0,0)}]{$\scriptstyle s(\Bob)=1$};
    \draw plot (-\x,+1/\x);
    \draw plot (+\x,-1/\x);
    \draw plot (-\x,-1/\x);
  \end{scope}
  \begin{scope}[orange,thick,shorten >=-2.5,shorten <=-2.5,o-o]
    \draw[xshift=0.1] (1,-1) -- (1,1) node[anchor=south west]{$\scriptstyle2\beta\,u\tensor1+\CHSH$};
    \draw[xshift=-0.5] (-1,-1) -- (-1,1) node[anchor=south east]{$\scriptstyle2\beta\,u\tensor1-\CHSH$};
  \end{scope}
  \draw[->] (-3.2, 0) -- (3.2, 0)
    node[near start,anchor=south]{$\scriptstyle\eqref{QUANTUM-VALUE}=\max$}
    node[near end,anchor=south]{$\scriptstyle\eqref{QUANTUM-VALUE}=\max$}
    node[right] {$\scriptstyle\alpha$};
  \draw[->] (0, -2.2) -- (0, 2.2) node[above] {$\scriptstyle\beta$};
\end{tikzpicture}
\end{center}
The orange lines thereby depict the versions appearing in \cite{BAMPS-PIRONIO} and \cite{COLADANGELO-STARK}.\\
In \cite{ACIN-MASSAR-PIRONIO} it has been further stated that one may pass between the range of parameter for the bias polynomial
\[
  (\alpha\leq1) \TAB\leftrightsquigarrow\TAB(\alpha\geq1)
\]
using some transformation on generators as in the beginning of this section.\\
This however cannot be easily achieved by some transformation as above.\linebreak
In fact our characterisation reveals for instance
\[
  \|\CHSH(\alpha,\BLANK)\|\neq\|\CHSH(1/\alpha,\BLANK)\|
\]
or similarly for any other such transformation on the parameter.\\
Instead we have found the more complicated phase separations
\[
  \begin{gathered}
    s(\Bob)\leq1 \\
    \eqref{QUANTUM-VALUE}\geq\eqref{LOCAL-VALUE-1}\vphantom{\frac{2\beta}{1+\beta^2}}
  \end{gathered}
  \TAB
  \begin{gathered}
    \iff \\
    \vphantom{\frac{2\beta}{1+\beta^2}}\iff\vphantom{\frac{2\beta}{1+\beta^2}}
  \end{gathered}
  \TAB
  \begin{gathered}
    |\alpha\beta|\leq1 \\
    \alpha^2\geq\frac{2\beta}{1+\beta^2}
  \end{gathered}
\]
which remained both previously unnoticed.\\
With these observations covered we may now proceed with the proof:

\begin{proof}
  We begin with the following simplification on the first tensor factor:
  \[
    u\tensor1\cdot\CHSH = \alpha\tensor(u+v) + (uv)\tensor(u-v) + 2\beta\tensor1
  \]
  As such we may replace as in the first section $w=uv=z\in\TORUS$:
  \[
    \ldots = \alpha(u+v) + z(u-v) + 2\beta
  \]
  For the second tensor factor we make the familiar observation
  \[
    (u\pm v)^2= u^2\pm\{u,v\} + v^2 = 2(1\pm s)
  \]
  and as such both expressions define themselves unitary generators:\\
  The scaling is more precisely satisfying the constraint
  \[
    0\leq2(1\pm s)\leq 4:\TAB 2(1+s)+2(1-s)=4
  \]
  and as such our unitary generators read for some euclidean optimisation:
  \begin{gather*}
    0\leq x,y\leq 1\TAB\text{with}\TAB x^2+y^2=1:\\[2\jot]
    2xu'=(u+v)\TAB\text{and}\TAB 2yv'= (u-v).
  \end{gather*}
  (The substitutation may be read off from $2(1+s)=4x^2$ and $2(1-s)=4y^2$.)\\
  One could also formulate above generators in more familiar terms as
  \[
    2\cos(\phi)u'=(u+v)\TAB\text{and}\TAB 2\sin(\phi)v'= (u-v)
  \]
  which we however refrain from in favor of the optimisation itself.\\
  We further note that above generators are anticommuting since
  \[
    \{u',v'\}\sim\{u+v,u-v\}=\{u,u\} + \{u,v\} - \{v,u\} - \{v,v\} = 0.
  \]
  As such our bias polynomial from above reads with these generators
  \[
    \ldots =
    2\Bigl( (\alpha x) u' + (yz) v' + \beta \Bigr)
  \]
  where we from now on set for simplicity $u=u'$ and $v=v'$.\\
  This operator fails to be normal (basically because of our very first factorisation) and as such we invoke the \CSTAR-identity in order to access spectral theory which should be always the first step in such investigations.
  As such we consider the squared bias for which we obtain using above expression
  \begin{gather*}
    \frac{1}{4}\mediumleft(\CHSH\mediumright)^2
    = \frac{1}{4}\Bigl(\CHSH\cdot u\tensor1\Bigr)\Bigl(u\tensor1\cdot\CHSH\Bigr) =\hspace{3.5cm} \\[2\jot]
    \begin{aligned}
      &= (\alpha x)^2 + y^2|z|^2 + \beta^2 + \biggl(2(\alpha\beta x) u + (\beta y)(z+z^*) v+ (\alpha xy)(z-z^*)uv\biggr) \\
      &= (\alpha x)^2 + y^2 + \beta^2 + 2\biggl((\alpha\beta x) u + (\beta y)\cos(\omega) v+ (\alpha xy)\sin(\omega)(w:=iuv)\biggr).
    \end{aligned}
  \end{gather*}
  This triplet defines mutually anticommuting generators (they basically define an abstract triplet of Pauli matrices) as one easily infers
  \[
    \{u,v\}=0\TAB\implies\TAB\mediumleft\{~u~\medium|~w=iuv~\mediumright\}=0=\mediumleft\{~v~\medium|~w=iuv~\mediumright\}
  \]
  and as such define selfadjoint unitaries (similar as in our computation above)
  \[
    \mediumleft(a u + bv + c w\mediumright)^2 = a^2u^2 + \ldots + \cancel{2ab \{u,v\}} + \ldots =  a^2+b^2+c^2
  \]
  whence above expression has spectrum given by
  \[
    \spectrum\Bigl(a u + bv + c w\Bigr) = \Bigl\{\pm\sqrt{a^2 + b^2+ c^2}\Bigr\} = \REALS\cap \left\|\smallMATRIX{a\\b\\c}\right\|\TORUS.
  \]
  Together with some constant term we thus obtain as operator norm:
  \[
    \Bigl\|a u + bv + c w +d\Bigr\| = \Bigl\|\spectrum(a u + bv + c w) + d\Bigr\| = \sqrt{a^2 + b^2+ c^2} + |d|
  \]
  This defines some particular instance of random walk behavior!
  For more interesting phenomena of this and similar types we refer to \cite{LEHNER}.\\
  We thus obtain in our case (along our optimisation)
  \begin{gather*}
    \frac{1}{4}\Bigl\|\CHSH\mediumleft(x^2+y^2=1,z=\cos(\omega)+i\sin(\omega)\in\TORUS\mediumright)\Bigr\|^2 =\hspace{2cm} \\[2\jot]
    = (\alpha x)^2 + y^2 + \beta^2 + 2\sqrt{(\alpha\beta x)^2 + (\beta y)^2\cos^2(\omega) + (\alpha xy)^2\sin^2(\omega)}
  \end{gather*}
  which one may now optimise (whose computation we defer to the appendix).\\
  The resulting values and optimal parameter are given in the theorem.
\end{proof}

\section{Tilted CHSH games: optimal state}

We may now classify optimal states based on the optimal values we found in the previous section.
In order to handle this task, we first require a more concrete description on the state space in general.
For this we first recall our concrete embedding into the ambient function algebra (from section \ref{sec:REPRESENTATION-THEORY})
\[
  \CSTAR\mediumleft(\INTEGERS/2*\INTEGERS/2\mediumright)=\CSTAR\mediumleft(~u^2=1=v^2~\medium|~\{u,v\}=2\id~\mediumright)\subset\CONTINUOUS\mediumleft(~[-1,1]\to M_2~\mediumright)
\]
from which we obtain for the corresponding twoplayer algebra underlying binary--input binary--output games (confer \cite[section~2]{FREI-CHSH})
\[
  \CSTAR\mediumleft(\twoplayer\mediumright)
  \subset
  \CONTINUOUS\mediumleft(~[-1,1]\times[-1,1]\to M_2\tensor M_2~\mediumright).
\]
We may thus freely extend any state on our two algebra to also act on the ambient function algebra (using Hahn--Banach)
\begin{TIKZCD}
  \CSTAR\mediumleft(\twoplayer\mediumright)\subset \CONTINUOUS\mediumleft(~[-1,1]\times[-1,1]\to M_2\tensor M_2~\mediumright)\rar&\COMPLEX
\end{TIKZCD}
and as such the state space for binary binary--output games in general arises by restriction from the state space for the ambient function algebra:
\[
  \hspace{1.5cm}S\CSTAR\mediumleft(\twoplayer\mediumright) = S\CONTINUOUS\mediumleft([-1,1]\times[-1,1],M_2\tensor M_2\mediumright)
  ~\Big\vert_{ \CSTAR(\twoplayer)}
\]
The latter state space however has a much more tractable state space:\\
The state space admits for instance some familiar description in terms of integration against noncommutative Borel probability measures
\begin{gather*}
    \phi:\CONTINUOUS\Big([-1,1]\times[-1,1],M_2\tensor M_2\Big)\to\COMPLEX:\TAB
    \phi(f)=\int_{[-1,1]\times[-1,1]}f\,\differential\mu
\end{gather*}
which one may derive using the additional embedding $\CONTINUOUS(X,A)\subset\ell^\infty(X,A)$.\\
While this defines a legitimate strategy,
we generally propose any elementary operator algebraic approach
rendering the measure theoretic ones an artefact.\linebreak
Before we do so let us recall that we denote (as is customary) the algebra of continous functions vanishing at infinity as
\[
  \CONTINUOUS_0(X,A)=\CONTINUOUS_0(X)\tensor A
\]
for some locally compact Hausdorff space $X$ and any operator algebra $A$,\\
and recall that the abelian tensor factor defines a nuclear operator algebra.\\
With this in mind we now wish to show the following:

\begin{PROP}[Optimal states: maximising quotient]\label{THEOREM:MAXIMISING-QUOTIENT}
  For any abritrary element (seen as a bias polynomial)
  and with operator norm (seen as its maximal or rather extremal bias)
  \[
    \bias\in\CONTINUOUS_0(X,A):\TAB\TAB \|\bias\|=\sup\|\bias(X)\|
  \]
  consider the corresponding short exact sequence
  \begin{TIKZCD}
    0\rar& \CONTINUOUS_0(\rest,A) \rar& \CONTINUOUS_0(X,A) \rar &\CONTINUOUS(\best,A)\rar& 0
  \end{TIKZCD}
  given by the defining subspace (and its complement)
  \[
    \best:=\Bigl\{~\|\bias(x)\|=\|\bias\|~\Bigr\}\subset X.
  \]
  Then any operator algebraic state $\phi:\CONTINUOUS_0(X,A)\to\COMPLEX$
  maximising above bias in the sense $|\phi(\bias)|=\|\bias\|$
  necessarily factors by the quotient
  \begin{TIKZCD}
    \phi:\CONTINUOUS_0(X,A)
    \rar&
    \CONTINUOUS_0(\best,A)
    \rar&
    \COMPLEX
  \end{TIKZCD}
  and as such the optimal state space arises as a subspace for the quotient.
\end{PROP}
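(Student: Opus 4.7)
The plan is to reduce the statement to the standard Cauchy--Schwarz ``peak'' argument for states, and then to exploit the pointwise positivity of $c := \|\bias\|^2\cdot 1-\bias^*\bias$ to extract the entire ideal.

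First I would replace $\bias$ by a scalar phase $e^{-i\theta}\bias$ so that $\phi(\bias)=\|\bias\|=:\lambda$ is real and positive (which affects neither the hypothesis nor $\bias^*\bias$), and extend $\phi$ to a state on the unitisation of $\CONTINUOUS_0(X,A)$ via Hahn--Banach. The Cauchy--Schwarz inequality for states then squeezes
\[
  \lambda^2=|\phi(\bias)|^2\leq\phi(\bias^*\bias)\leq\|\bias^*\bias\|=\lambda^2,
\]
so that $\phi(c)=0$ for the positive element $c:=\lambda^2\cdot 1-\bias^*\bias\geq0$. Since $c^2\leq\|c\|\cdot c$, a second round of Cauchy--Schwarz yields $\phi(xc)=0=\phi(cx)$ for every $x$ in the algebra, so $\phi$ annihilates the two-sided ideal generated by $c$.

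Next I would analyse $c$ pointwise: the map $x\mapsto\|\bias(x)\|$ is continuous, so $\best$ is closed and $U:=X\setminus\best$ is open. On every compact $K\subset U$ the positive gap $\delta_K:=\inf_{x\in K}(\lambda^2-\|\bias(x)\|^2)>0$ provides the pointwise lower bound $c(x)\geq\delta_K\cdot 1$ for $x\in K$. The key comparison step is then that any positive $f\in\CONTINUOUS_0(\rest,A)$ with support inside such a $K$ satisfies $f\leq(\|f\|/\delta_K)\,gc$ for some bump function $g\in\CONTINUOUS_0(U)$ equal to $1$ on $K$; since $\phi$ vanishes on $\CONTINUOUS_0(X,A)\cdot c$, this forces $\phi(f)=0$.

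Finally, an arbitrary positive $f\in\CONTINUOUS_0(\rest,A)$ is norm-approximated by $g_nf$ for an approximate unit $(g_n)\subset\CONTINUOUS_0(U)$, giving $\phi(f)=0$ by continuity, and since $\CONTINUOUS_0(\rest,A)$ is linearly spanned by its positive cone this shows that $\phi$ factors through the quotient. The main obstacle I anticipate is the bookkeeping around the unitisation, since $c$ lives most naturally in $\widetilde{\CONTINUOUS_0(X,A)}$ rather than in $\CONTINUOUS_0(X,A)$ itself; one must ensure that the comparisons $f\leq\lambda_f c$ and the ensuing Cauchy--Schwarz manipulations take place inside one and the same algebra on which the extended state is defined.
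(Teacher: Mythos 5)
Your argument is correct and reaches the same conclusion, but by a genuinely different route from the paper's. Both proofs open with the Cauchy--Schwarz squeeze $|\phi(\bias)|^2\leq\phi(\bias^*\bias)\leq\|\bias\|^2$; after that they diverge. The paper immediately trades the operator-valued problem for a scalar one: from the pointwise bound $\bias^*\bias(x)\leq\|\bias(x)\|^2\cdot 1$ it deduces that an optimal $\phi$ must also maximise the scalar function $\|\bias(\cdot)\|\in\CONTINUOUS_0(X)$, then runs a hereditary-subalgebra argument entirely inside the abelian factor (hereditary equals ideal in the commutative case, identified with $\CONTINUOUS_0(\rest)$ by Stone--Weierstrass), and finally invokes the multiplicative domain of $\phi$ to upgrade $\phi\bigl(\CONTINUOUS_0(\rest)\bigr)=0$ to $\phi\bigl(\CONTINUOUS_0(\rest)\tensor A\bigr)=0$. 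You instead stay with the operator-valued element $c=\|\bias\|^2-\bias^*\bias$ and annihilate the ideal it generates via a pointwise domination argument over compact subsets of the open set $X\setminus\best$. Your route is more hands-on and sidesteps both the hereditary-subalgebra machinery and the multiplicative-domain step; the paper's route localises all the topology in the abelian tensor factor and avoids your compact-exhaustion bookkeeping. Two remarks on your version. First, the bump function $g$ is redundant: since $c\geq 0$ everywhere, the comparison $f\leq(\|f\|/\delta_K)\,c$ already holds globally (it is trivial off $K$), so positivity of $\phi$ together with $\phi(c)=0$ gives $\phi(f)=0$ directly, and the second round of Cauchy--Schwarz is then not needed either. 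Second, the unitisation issue you flag is real but harmless: with the simplification just described the only element outside $\CONTINUOUS_0(X,A)$ ever used is $c$ itself, which lives in the unitisation $\CONTINUOUS_0(X,A)+\COMPLEX 1$, where the state extends canonically and all comparisons $f\leq\lambda_f\,c$ make sense.
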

\begin{proof}
  The entire proof rests on the embedding the continuous subalgebra within the larger direct product
  while freely extending any state to act also on the direct product (using Hahn-Banach)
  \begin{TIKZCD}
    \CONTINUOUS_0\mediumleft(X\to A\mediumright)\rar[hookrightarrow]&\prod_XA\rar[dashed]&\COMPLEX
  \end{TIKZCD}
  which has the advantage that indicator functions become accessible.\\
  In particular this allows the description of states as integration against noncommutative measures.
  We will however instead provide an entirely self-contained operator algebraic approach (in particular as the authors share the view of algebraic arguments to be favorable to measure theoretic ones).\\
  For this we may first replace any element by its squared version
  \[
    \bias\TAB\rightsquigarrow\TAB\bias'=\bias^*\bias\geq0.
  \]
  Indeed by the operator algebraic version of Cauchy--Schwarz from \cite{CHOI-1974}
  \[
    |\phi(\bias)|^2=\phi(\bias)^*\phi(\bias)\leq\phi(\bias^*\bias)\leq \|\bias^*\bias\|=\|\bias\|^2
  \]
  from which any optimal state necessarily also maximises the squared bias above.
  We may thus from now on assume the given element to be positive.
  As such we obtain next for the pointwise spectral radius as a continuous function
  \[
    \|\bias(\BLANK)\|\in\CONTINUOUS_0(X):\TAB\TAB
    0\leq \bias(\BLANK)\leq\|\bias(\BLANK)\|\leq\|\bias\|.
  \]
  This bound is operator algebraic since one has the familiar relation
  \[
    \hspace{-1cm}a,b\in{\prod}_{X} A:\TAB\TAB a\leq b\TAB\iff\TAB a(x)\leq b(x)
  \]
  which easily follows using the algebraic definition for their spectrum.\\
  For some given optimal state we thus obtain the inequality
  \[
    \|\bias\|=\phi\Bigl(\bias(\BLANK)\Bigr)\leq\phi\Bigl(\|\bias(\BLANK)\|\Bigr)
    \leq
    \|\bias\|
  \]
  and as such any optimal state necessarily also maximises above pointwise spectral radius as a scalar valued function $\|\bias(\BLANK)\|\in\CONTINUOUS(K)$:
  \[
    \phi\Bigl(\bias(\BLANK)\Bigr)=\|\bias\|\TAB\implies\TAB\phi\Bigl(\|\bias(\BLANK)\|\Bigr)=\|\bias\|.
  \]
  That is above matrix problem infers a corresponding scalar problem as an obstruction for optimal states.
  With this in mind we now aim to verify the scalar analog (of our original problem)
  \[
    \phi\Bigl(\|\bias(\BLANK)\|\Bigr)=\|\bias\|
    \TAB\implies\TAB
    \phi\Bigl(\CONTINUOUS_0(\rest)\Bigr)=0 \TAB ?
  \]
  which we solve using an argument involving hereditary subalgebras:\\
  In order to ease notation let us introduce their difference
  \[
    \phi\biggl(~a(\BLANK):=\|\bias\|-\|\bias(\BLANK)\|\geq0~\biggr)=0.
  \]
  Recall for this the operator algebraic bound (as in any introductory textbook):
  \[
    a^*\Bigl(x^*x\leq\|x^*x\|\Bigr)a
    \TAB\implies\TAB
    \phi\Bigl(a^*x^*xa\Bigr)\leq\|x^*x\|\phi(a^*a)
  \]
  Any operator algebra is however linearly spanned by its positive elements and as such any optimal state also vanishes on the hereditary algebra
  \begin{gather*}
    A=\pos(A)-\pos(A)+i\pos(A)-i\pos(A):\\[0.5\baselineskip]
    \phi\Bigl(a=\sqrt{a}\sqrt{a}\Bigr)=0~\implies~\phi\Bigl(\sqrt{a}A\sqrt{a}\Bigr)=0.
  \end{gather*}
  We meanwhile note that the hereditary subalgebra may be characterised using a corresponding approximate unit and as such the hereditary subalgebra agrees for any positive power (as a well-known result from operator algebras):
  \[
    \overline{aAa}=\mediumleft\{~b~\medium|~a^{1/n}ba^{1/n}\to b~\mediumright\}:\TAB\TAB
    \overline{aAa}=\overline{a^rAa^r}\TAB(r>0).
  \]
  So far we found that any optimal state vanishes on the hereditary subalgebra.\\
  Now any hereditary subalgebra within an abelian operator algebra coincides with its generated ideal
  and as such we are left to identify the expected ideal:\\
  For this we observe their obvious inclusion
  \[
    a\CONTINUOUS_0(X)a\subset\mediumleft\{~b(\best)=0~\mediumright\}\subset\CONTINUOUS_0(\rest).
  \]
  Their equality now easily follows as a trivial application of Stone--Weierstrass:\\
  We note for this that the hereditary subalgebra vanishes nowhere since for
  \begin{gather*}
    \hspace{-2cm}x\in\rest=\mediumleft\{~a(x)>0~\mediumright\}:\TAB\TAB
    a(x)\CONTINUOUS_0(x)a(x)\neq0
  \end{gather*}
  and is nowhere constant since already for $x,y\in\rest$
  \[
    \CONTINUOUS_0(X)x\equiv\CONTINUOUS_0(X)y\TAB\implies x=y
  \]
  and as such the hereditary subalgebra spans the expected ideal as desired.\\
  Note that any subalgebra within the kernel belongs to the multiplicative domain for the operator algebraic state for trivial reason
  \[
    \CONTINUOUS_0(\rest)\subset\ker(\phi)
    \TAB\implies\TAB\phi(ab)=0=\phi(a)\phi(b)\TAB\Bigl(a,b\in\CONTINUOUS_0(\rest)\Bigr)\hspace{-1cm}
  \]
  and as such this extends to above tensor product (see~\cite{CHOI-1974} or \cite{PAULSEN-BOOK})
  \[
    \phi\Bigl(\CONTINUOUS_0(X)\tensor A\Bigr) =
    \phi\Bigl(\CONTINUOUS_0(X)\tensor1\Bigr)\phi\Bigl(1\tensor A\Bigr)=0.
  \]
  Concluding that any optimal state factors by the quotient as desired.
\end{proof}

We may now apply the previous result in case of our tilted CHSH games:\\
Consider for this its maximising subspace as in the previous proposition
\[
  \best:=\Bigl\{~\mediumleft\|\CHSH\mediumleft(~s(\Alice),s(\Bob)~\mediumright)\mediumright\|=\|\CHSH\|~\Bigr\}\subset [-1,1]\times[-1,1].
\]
Since any maximising state necessarily lives on its corresponding quotient
we may thus reduce the problem of optimal states to the quotient:
\begin{gather*}
  \phi\Big(\CHSH(\alpha,\beta)\Big)=\|\CHSH(\alpha,\beta)\|
  \TAB\implies\\[2\jot]
  \begin{tikzcd}[column sep=small]
    \CONTINUOUS\mediumleft(~[-1,1]\times[-1,1]~\medium|~M_2\tensor M_2~\mediumright)
    \rar&
    \CONTINUOUS\mediumleft(~\best~\medium|~M_2\tensor M_2~\mediumright)\rar&\COMPLEX
  \end{tikzcd}
\end{gather*}
In particular we already found the optimal quotient for each region as
\[
  \def\arraystretch{1.5}
  \begin{array}{c|c|c|c}
    \alpha,\beta\in\REALS & \eqref{QUANTUM-VALUE}=\max & \eqref{LOCAL-VALUE-1}=\max & \eqref{LOCAL-VALUE-alpha}=\max \\
    \hline
    \best & \{0\}\times\eqref{FORMULA} & (-1\cup1)\times\{-1\} & [-1,1]\times \{1\}
  \end{array}
\]
whereas the overlapping boundary cases define degenerate cases.\\
We now restrict our attention on the fully noncommutative region given by \eqref{QUANTUM-VALUE}:\\
With above optimal quotient we may thus characterise its optimal states which
we now establish as the second main result of the current section:

\begin{THM}\label{THEOREM:TILTED:QUANTUM-STATE}
  Consider the quantum region (as established in theorem \ref{THEOREM:TILTED:QUANTUM-VALUE})
  \[
    \hspace{-2cm}\eqref{QUANTUM-VALUE}>\eqref{LOCAL-VALUE-1},\eqref{LOCAL-VALUE-alpha}:\TAB\TAB
    \frac{2\beta}{1+\beta^2}<\alpha^2<\frac{1}{\beta^2}
  \]
  and regard its optimal quotient (while recalling theorem \ref{THEOREM:REPRESENTATION-THEORY})
  \[
    M(2)\tensor M(2) =
    \frac{\CSTAR(u^2=1=v^2)}{\{u,v\}=2s(\Alice)}\tensor
    \frac{\CSTAR(u^2=1=v^2)}{\{u,v\}=2s(\Bob)}
  \]
  given by the optimal anticommutation
  \[
    \mediumleft\|\CHSH\mediumleft(~s(\Alice)=0,s(\Bob)=\eqref{FORMULA}~\mediumright)\mediumright\|=\|\CHSH\|.
  \]
  Note that each factor is linearly spanned by a finite number of words in group generators due to each anticommutation $\{u,v\}=2s$
  \begin{align*}
      \TAB \words(u,v)=\{1,u,v,uv=-vu+2s,\ldots\}=\{1,u,v,uv\}
  \end{align*}
  with moreover the exceptional degenerate case $v=\pm u$ whenever $s=\pm1$.\\
  Following this observation any state maximising the quantum value \eqref{QUANTUM-VALUE}
  \[
    \phi:M(2)\tensor M(2)\to\COMPLEX:\TAB\phi\mediumleft(\CHSH\mediumright)=2\sqrt{1+\alpha^2}\sqrt{1+\beta^2}=\mediumleft\|\CHSH\mediumright\|
  \]
  is uniquely and entirely determined by the table of moments (which entails all higher and mixed moments by above observation)
  \[
    \def\arraystretch{1.5}
    \begin{array}[c]{c|c|c|c|c}
      \phi(\ldots) & \BLANK\tensor 1 & \BLANK\tensor 2u & \BLANK\tensor 2v & \BLANK\tensor 4uv \\
      \hline
      1\tensor\BLANK & 1=a^2+d^2 & (a^2-d^2)|w_+| & (a^2-d^2)|w_+| & (a^2+d^2)(|w_+|^2-|w_-|^2) \\
      \hline
      u\tensor\BLANK & a^2-d^2 & (a^2+d^2)|w_+| & (a^2+d^2)|w_+| & (a^2-d^2)(|w_+|^2-|w_-|^2) \\
      \hline
      v\tensor\BLANK & 0 & 2ad|w_-| & -2ad|w_-| & 0 \\
      \hline
      uv\tensor\BLANK & 0 & 0 & 0 & 4ad|w_+|\cdot|w_-|
    \end{array}
  \]
  with familiar transformation satisfying
  \[
    \Bigl|w_+:=u+v\Bigr|=2\alpha\sqrt{\frac{1+\beta^2}{1+\alpha^2}}
    \TAB\text{and}\TAB
    \Bigl|w_-:=u-v\Bigr|=2\sqrt{\frac{1-\alpha^2\beta^2}{1+\alpha^2}}
  \]
  and vector state given by its Schmidt decomposition
  \[
    \ket\phi =
    a\Ket{\vphantom{\tfrac{w_+}{|w_+|}}u\phi_1=\phi_1}\tensor\Ket{\tfrac{w_+}{|w_+|}\psi_1=\psi_1} +
    d\Ket{\vphantom{\tfrac{w_+}{|w_+|}}u\phi_2=-\phi_2}\tensor \Ket{\tfrac{w_+}{|w_+|}\psi_2=-\psi_2}
  \]
  with Schmidt coefficients
  \[
    a=\sqrt{\frac{\sqrt{1+\beta^2}+\beta\sqrt{1+\alpha^2}}{2\sqrt{1+\beta^2}}}
    \TAB\text{and}\TAB
    d=\sqrt{\frac{\sqrt{1+\beta^2}-\beta\sqrt{1+\alpha^2}}{2\sqrt{1+\beta^2}}}
  \]
  from which there is no ambiguity left anymore whatsoever.\\
  The state is maximally entangled only along the horizontal axis
  \[
  \hspace{-1cm}\ket\phi=\maxket:\TAB\TAB
  \mediumleft(~a=\tfrac{1}{\sqrt2}=d~\mediumright)
  \TAB\iff\TAB
  \mediumleft(~\alpha\neq0~\medium|~\beta=0~\mediumright)
  \]
  while always partially entangled (given by the Schmidt-rank)
  \[
    \hspace{-1.7cm}\rank\ket{\phi}=2:\TAB\TAB\mediumleft(~a\neq0~\medium|~d\neq0~\mediumright) \TAB\iff\TAB |\alpha\beta|<1\TAB\checkmark
  \]
  and so defines a nonlocal quantum correlation as desired.\\
  Note that the entire result is in representation--free abstract algebraic form.\\
  So far everything about the unique optimal state for the tilted CHSH games.
\end{THM}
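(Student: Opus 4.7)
The plan is to reduce the question to a four-dimensional eigenspace problem via Proposition~\ref{THEOREM:MAXIMISING-QUOTIENT} and then diagonalise explicitly. By Theorem~\ref{THEOREM:TILTED:QUANTUM-VALUE}, throughout the quantum region the maximising locus $\best=\{(s,t):\|\CHSH(s,t)\|=\|\CHSH\|\}$ consists of the single point $(s(\Alice),s(\Bob))=(0,\eqref{FORMULA})$. Applying Proposition~\ref{THEOREM:MAXIMISING-QUOTIENT} to the ambient function algebra containing $\CSTAR(\twoplayer)$ thus forces every optimal state to factor through the fibre
\[
  M_2\tensor M_2 = \frac{\CSTAR(u^2=1=v^2)}{\{u,v\}=0}\tensor\frac{\CSTAR(u^2=1=v^2)}{\{u,v\}=2s(\Bob)}.
\]
In this finite-dimensional setting $\phi(\CHSH)=\|\CHSH\|$ forces the corresponding density matrix to be supported on the top eigenspace of the self-adjoint operator $\CHSH\in M_2\tensor M_2$; if this eigenspace is one-dimensional then $\phi=\ket\phi\bra\phi$ for a unit vector $\ket\phi$ unique up to phase.

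To diagonalise intrinsically I choose bases via the spectral theorem. On Alice's side, $\{u,v\}=0$ yields $\phi_1,\phi_2$ with $u\phi_i=\pm\phi_i$ and $v$ exchanging them. On Bob's side, $w_\pm:=u\pm v$ satisfy $w_\pm^2=2(1\pm s(\Bob))$, and direct substitution of the optimal value for $s(\Bob)$ produces exactly $|w_+|=2\alpha\sqrt{(1+\beta^2)/(1+\alpha^2)}$ and $|w_-|=2\sqrt{(1-\alpha^2\beta^2)/(1+\alpha^2)}$ as in the statement; moreover the rescaled $\tilde w_\pm:=w_\pm/|w_\pm|$ anticommute as order-two unitaries, giving $\psi_1,\psi_2$ with $\tilde w_+\psi_i=\pm\psi_i$ and $\tilde w_-$ exchanging. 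In the resulting tensor basis, the rewriting
\[
  \CHSH = \alpha\, u\tensor w_+ + v\tensor w_- + 2\beta\, u\tensor 1
\]
is block-diagonal on the pairs $\{\phi_1\psi_1,\phi_2\psi_2\}$ and $\{\phi_1\psi_2,\phi_2\psi_1\}$, with each block an explicit $2\times 2$ matrix of eigenvalues $\pm\alpha|w_+|\pm\sqrt{4\beta^2+|w_-|^2}$. Using $4\beta^2+|w_-|^2=4(1+\beta^2)/(1+\alpha^2)$ the maximum is $2\sqrt{(1+\alpha^2)(1+\beta^2)}=\|\CHSH\|$, attained only in the first block, and simply since $\alpha|w_+|>0$ separates the two block maxima throughout the quantum region.

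The eigenvector equation in the first block yields the ratio $d/a=(\sqrt{4\beta^2+|w_-|^2}-2\beta)/|w_-|$, and the factorisation
\[
  1-\alpha^2\beta^2=\mediumleft(\sqrt{1+\beta^2}-\beta\sqrt{1+\alpha^2}\mediumright)\mediumleft(\sqrt{1+\beta^2}+\beta\sqrt{1+\alpha^2}\mediumright)
\]
converts this ratio into the Schmidt coefficients $(a,d)$ displayed in the statement. With $\ket\phi=a\phi_1\psi_1+d\phi_2\psi_2$ in hand, every entry of the moment table is a routine matrix element computed from the intrinsic actions of $u,v$ on $\phi_i$ and of $w_\pm,uv$ on $\psi_j$; cross terms vanish whenever the Alice factor is diagonal in the $\phi_i$-basis, and the off-diagonal matrix elements of $uv$ on Bob's side cancel in symmetric pairs. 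The entanglement remarks are then immediate since $ad\neq0\iff|\alpha\beta|<1$ and $a=d\iff\beta=0$.

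The principal obstacle I anticipate is the representation-free bookkeeping: ensuring that the optimal vector $\ket\phi$ is described purely by the intrinsic eigenvalue equations required by the statement rather than via a particular Pauli-matrix presentation, and correctly handling the algebraic identity above, which is what converts the raw eigenvector ratio into the nested square-root expressions appearing in the stated Schmidt coefficients.
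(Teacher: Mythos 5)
Your proposal is correct and follows essentially the same route as the paper: reduce to the single fibre $M_2\tensor M_2$ via Proposition~\ref{THEOREM:MAXIMISING-QUOTIENT}, diagonalise $u,v$ on Alice's side and $w_\pm$ on Bob's side, observe the block decomposition into the two invariant subspaces, rule out the second block (your comparison of block maxima via $\alpha|w_+|>0$ is equivalent to the paper's check that $b^2+c^2\neq(\lambda+a)^2$ unless $\alpha=0$), and extract the Schmidt coefficients from the eigenvector ratio using the same factorisation of $1-\alpha^2\beta^2$. The only difference is presentational — you keep the spectral bases intrinsic where the paper writes out abstract Pauli matrices — so no further comment is needed.
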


Before we begin with the proof itself we additionally note the following:\\
As in the introductory article \cite{FREI-CHSH}, this entails all quantum commuting correlations
as these arise as states restricted on order--two moments
\[
  \begin{array}{cccc}
    \phi(u\tensor 1), & \phi(v\tensor 1), & \phi(1\tensor u), & \phi(1\tensor v) \\
    \phi(u\tensor u), & \phi(v\tensor u), & \phi(v\tensor u), & \phi(1\tensor v)
  \end{array}
\]
from which one may recover the traditional correlation table given by their corresponding projection valued measures.
In particular more traditional uniqueness of correlations (given for example by self-testing)
is excluding the last row and column from the correlation table given in the theorem above.\\
With this being said let us begin with the proof:

\begin{proof}
  Let us first combine expressions within the tilted CHSH games as
  \[
    \CHSH(\alpha,\beta)= \alpha u\tensor(u+v) + v\tensor(u-v) +2\beta u\tensor1
  \]
  which enforces entirely anticommuting generators as for Alice
  \[
    \{u\tensor 1,v\tensor1\}=0=2s(\Alice)
  \]
  as well as the familiar transformation for the second player
  \[
    \{u+v,u-v\}=\{u,u\}-\{u,v\} + \{v,u\} -\{v,v\}=0
  \]
  which we have also already made use of in the proof of theorem \ref{THEOREM:TILTED:QUANTUM-VALUE}.\\
  As such we obtain an abstract pair of Pauli matrices for each factor
  \begin{gather*}
      u\tensor1=Z\tensor1\TAB\text{and}\TAB v\tensor1=X\tensor1\\[2\jot]
      1\tensor w_+=|w_+|1\tensor Z \TAB\text{and}\TAB 1\tensor w_-=|w_-|1\tensor X
  \end{gather*}
  whose normalisation reads (as in the proof of theorem \ref{THEOREM:TILTED:QUANTUM-VALUE})
  \[
    |w_\pm|^2=(u\pm v)^2=u^2\pm 2s(\Bob)\{u,v\} + v^2 = 2\Bigl(1\pm s(\Bob)\Bigr).
  \]
  Computing those normalisation we obtain (with above pair of Pauli matrices)
  \[
    \CHSH(\alpha,\beta)=2\alpha^2\sqrt{\frac{1+\beta^2}{1+\alpha^2}}Z\tensor Z + 2\sqrt{\frac{1-\alpha^2\beta^2}{1+\alpha^2}}X\tensor X +2\beta Z\tensor1
  \]
  with corresponding maximal bias (from theorem \ref{THEOREM:TILTED:QUANTUM-VALUE})
  \[
    \hspace{-1cm}\eqref{QUANTUM-VALUE}\TAB\TAB\|\CHSH(\alpha,\beta)\|=2\sqrt{1+\alpha^2}\sqrt{1+\beta^2}.
  \]
  As such we aim to solve the following eigenvector problem:
  \begin{gather*}
    \biggl(\alpha^2\sqrt{1+\beta^2}Z\tensor Z + \sqrt{1-\alpha^2\beta^2}X\tensor X +\beta\sqrt{1+\alpha^2} Z\tensor1\biggr)\ket\phi \\
    = \biggl(\lambda:=(1+\alpha^2)\sqrt{1+\beta^2}\biggr)\ket{\phi}
  \end{gather*}
  Note that any operator of the form (as above)
  \[
    T= a Z\tensor Z + bX\tensor X + c Z\tensor 1
  \]
  splits the given Hilbert space into invariant subspaces
  \[
    \COMPLEX\mediumMATRIX{ 1\\0 }\tensor\mediumMATRIX{ 1\\0 } + \COMPLEX\mediumMATRIX{ 0\\1 }\tensor\mediumMATRIX{ 0\\1 }
    \TAB\text{and}\TAB
    \COMPLEX\mediumMATRIX{ 1\\0 }\tensor\mediumMATRIX{ 0\\1 } + \COMPLEX\mediumMATRIX{0\\1}\tensor\mediumMATRIX{ 1\\0 }
  \]
  and as such the eigenvector problem may be individually solved for each invariant subspace.
  On the first invariant subspace the operator acts as
  \begin{gather*}
    \mediumMATRIX{x\\y}:= x \mediumMATRIX{1\\0}\tensor\mediumMATRIX{1\\0} + y \mediumMATRIX{0\\1}\tensor\mediumMATRIX{0\\1}:\\[2\jot]
    T\mediumMATRIX{x\\y}= \biggl(a1 + bX + cZ\biggr)\mediumMATRIX{x\\y}
  \end{gather*}
  while on the second one as
  \begin{gather*}
    \mediumMATRIX{x\\y}:= x \mediumMATRIX{1\\0}\tensor\mediumMATRIX{0\\1} + y \mediumMATRIX{0\\1}\tensor\mediumMATRIX{1\\0}:\\[2\jot]
    T\mediumMATRIX{x\\y}= \biggl(-a1 + bX + cZ\biggr)\mediumMATRIX{x\\y}.
  \end{gather*}
  As such the eigenspace problem from above reads on respective subspace:
  \[
    \biggl(bX + cZ\biggr)\mediumMATRIX{x\\y}=(\lambda\mp a)\mediumMATRIX{x\\y}\TAB?
  \]
  Note that (as also several times before) the left-hand side in both cases defines a selfadjoint unitary (up to normalisation)
  and as such a necessary condition for the existence of nontrivial solutions becomes respectively
  \[
    (bX + cZ)^2= b^2+c^2 = (\lambda\mp a)^2 \TAB?
  \]
  This necessary condition is satisfied for the first subspace
  \[
    \hspace{1cm}b^2+c^2=(1-\alpha^2\beta^2) + \beta^2(1+\alpha^2) =
    (1+\beta^2)=(\lambda-a)^2\TAB\checkmark
  \]
  whereas this fails for the second subspace (unless $\alpha=0$)
  \[
    \hspace{0.5cm}b^2+c^2=(1+\beta^2) \neq
    (1+2\alpha^2)^2(1+\beta^2) = (\lambda+a)^2\TAB\times
  \]
  noting that the exceptional case lies outside the desired region
  $\eqref{LOCAL-VALUE-1},\eqref{LOCAL-VALUE-alpha}\,{<}\,\eqref{QUANTUM-VALUE}$.\\
  As such the eigenspace is one-dimensional and so there is a unique optimal state given by the solution to the equation for the first invariant subspace.

  We may now solve the eigenvector problem for the valid first invariant subspace given by $bX\smallMATRIX{x\\y}=\mediumleft( (\lambda-a)1 + cZ\mediumright)\smallMATRIX{x\\y}$ which reads written out
  \[
    \begingroup
    \renewcommand*{\arraystretch}{1.5}
    \MATRIX{ \sqrt{1-\alpha^2\beta^2}~y\\ \sqrt{1-\alpha^2\beta^2}~x } =
    \MATRIX{ \mediumleft(\sqrt{1+\beta^2}-\beta\sqrt{1+\alpha^2}\mediumright)~x \\
    \mediumleft(\sqrt{1+\beta^2}+\beta\sqrt{1+\alpha^2}\mediumright)~y }
    \endgroup
  \]
  while further noting the helpful identity (which is not too evident at first)
  \[
    \mediumleft(1-\alpha^2\beta^2\mediumright) = \Bigl(\sqrt{1+\beta^2}+\beta\sqrt{1+\alpha^2}\Bigr)\Bigl(\sqrt{1+\beta^2}-\beta\sqrt{1+\alpha^2}\Bigr).
  \]
  As such above equation reduces to the following more symmetric relation:
  \[
    \sqrt{\sqrt{1+\beta^2}-\beta\sqrt{1+\alpha^2}}~x = \sqrt{\sqrt{1+\beta^2}+\beta\sqrt{1+\alpha^2}}~y
  \]
  Note these define the final Schmidt coefficients as
  \begin{gather*}
    \ket\varphi = \mediumMATRIX{x\\y}
    = x\mediumMATRIX{1\\0}\tensor\mediumMATRIX{1\\0} +
    + y\mediumMATRIX{0\\1}\tensor\mediumMATRIX{0\\1}\\
    = x\Ket{\vphantom{\tfrac{w_+}{|w_+|}}u\varphi_1=\varphi_1}\tensor\Ket{\tfrac{w_+}{|w_+|}\psi_1=\psi_1} +
    + y\Ket{\vphantom{\tfrac{w_+}{|w_+|}}u\varphi_2=-\varphi_2}\tensor \Ket{\tfrac{w_+}{|w_+|}\psi_2=-\psi_2}
  \end{gather*}
  and as such above equation reveals the ratio of Schmidt coefficients,\\
  whereas their normalisation follows from the normalisation
  \[
    x^2+y^2 =
    \Bigl(\sqrt{1+\beta^2}+\beta\sqrt{1+\alpha^2}\Bigr) +
    \Bigl(\sqrt{1+\beta^2}-\beta\sqrt{1+\alpha^2}\Bigr) = 2\sqrt{1+\beta^2}.
  \]
  From these we may finally derive every moment such as for instance
  \begin{gather*}
    \bra\phi 1\tensor 2u \ket\phi = \bra\phi 1\tensor(w_+ + w_-)\ket\phi =\hspace{2cm}\\[2\jot]
    \hspace{2cm}=\bra\phi \biggl( |w_+|1\tensor Z + |w_-|1\tensor X\biggr)\ket\phi=(a^2-d^2)|w_+|
  \end{gather*}
  which may be easily read off from
  \begin{gather*}
    \biggl(a({\scriptstyle 1\:0})\tensor({\scriptstyle 1\:0}) + d({\scriptstyle 0\:1})\tensor({\scriptstyle 0\:1})\biggr)
    1\tensor Z
    \biggl(a\smallMATRIX{1\\0}\tensor\smallMATRIX{1\\0} + d\smallMATRIX{0\\1}\tensor\smallMATRIX{0\\1}\biggr) = (a^2-d^2) \\
    \biggl(a({\scriptstyle 1\:0})\tensor({\scriptstyle 1\:0}) + d({\scriptstyle 0\:1})\tensor({\scriptstyle 0\:1})\biggr)
    1\tensor X
    \biggl(a\smallMATRIX{1\\0}\tensor\smallMATRIX{1\\0} + d\smallMATRIX{0\\1}\tensor\smallMATRIX{0\\1}\biggr) =0
  \end{gather*}
  whereas we obtain for higher moments such as
  \[
    \bra\phi v\tensor uvu \ket\phi
    = -\bra\phi v\tensor v \ket\phi + 2s(\Bob)\bra\phi v\tensor u\ket\phi
    =2ad|w_+|\Bigl(1+2s(\Bob)\Bigr).
  \]
  As such we have found the entire information on the unique optimal state.\\
  This concludes the proof of the second main result from this section.
\end{proof}

\section*{Acknowledgements}

The first named author would like to thank Remigiusz Augusiak for pointing out the quite valuable perspective of partial entanglement for correlations.\\
The second named author would like to further acknowledge the Bad Hoennef Summer school on Quantum Computing
for some quite valuable insights.\\
Furthermore the authors would like to thank Moritz Weber for the kind hospitality during the Focus semester on Quantum Information Saarbruecken.\\
Finally the first named author acknowledges the support under the\linebreak \mbox{Marie Curie} Doctoral~Fellowship~No.~801199.

\appendix
\section{Optimisation problem}

We now solve the optimisation problem from the proof of theorem \ref{THEOREM:TILTED:QUANTUM-VALUE}:\\
Recall for this the optimisation parameter (in terms of generating unitaries)
\begin{gather*}
  \begin{gathered}
    \Alice:\\
    \Bob:
  \end{gathered}
  \TAB\TAB
  \begin{gathered}
    w=uv\tensor1= \cos(\omega)+i\sin(\omega)\in\TORUS\\
    4(x^2+y^2)=1\tensor|u+v|^2 + 1\tensor|u-v|^2=4
  \end{gathered}
  \hspace{1cm}
\end{gather*}
from which the optimisation problem reads:
\begin{gather*}
  \frac{1}{4}\Bigl\|\CHSH\mediumleft(w=\cos(\omega)+i\sin(\omega)\medium|x^2+y^2=1\mediumright)\Bigr\|^2 =\hspace{2cm}\\
  = (\alpha x)^2 + y^2 + \beta^2 + 2\sqrt{(\alpha\beta x)^2 + (\beta y)^2\cos^2(\omega) + (\alpha xy)^2\sin^2(\omega)}
\end{gather*}
We may conveniently combine above expression using $\mediumleft(A=\alpha^2x^2\medium|B=y^2\mediumright)$:
\[
  = A + B + \beta^2 + 2\sqrt{\beta^2(A+B)\cos^2(\omega)+A(B+\beta^2)\sin^2(\omega)}
\]
We note that the expression outside the square root does not depend on the optimisation parameter for the first player Alice.
As such it suffices to consider the boundary cases for said optimisation parameter:
\begin{align*}
  \cos^2(\omega)=1:\TAB\TAB
  &= (A + B) + \beta^2 + 2\beta\sqrt{(A+B)} = \Bigl(\sqrt{A+B}+\beta^2\Bigr)^2\\
  \sin^2(\omega)=1:\TAB\TAB
  &= A + (B + \beta^2) + 2\sqrt{A(B+\beta^2)} = \Bigl(\sqrt{A}+\sqrt{B+\beta^2}\Bigr)^2
\end{align*}
The first case is easily solved and reveals the values \eqref{LOCAL-VALUE-1} and \eqref{LOCAL-VALUE-alpha}:
\[
  \max\Bigl(\sqrt{A+B}+\beta\Bigr)^2
  = \Bigl(\sqrt{\max(A+B)}+\beta\Bigr)^2
  = \Bigl(\max(\alpha,1)+\beta\Bigr)^2.
\]
The second case reveals the interesting quantum value.
For this we search for local maxima in terms of its derivative which reads in our case
\[
  d\Bigl(\sqrt{A}+\sqrt{B+\beta^2}\Bigr) =
  \frac{dA}{2\sqrt{A}} + \frac{dB}{2\sqrt{B+\beta^2}} = 0\TAB{?}
\]
In particular we obtain as a necessary squared relation:
\[
  dA\sqrt{B+\beta^2} = -dB\sqrt{A}
  \TAB\implies\TAB
  (dA)^2(B+\beta^2) = (dB)^2A
\]
In our case those derivatives read using $x^2+y^2=1$,
\[
  \frac{dA}{d(r=x^2)}=\alpha^2
  \TAB\text{and}\TAB
  \frac{dB}{d(r=x^2)}=-1.
\]
As such the solution for the squared relation also carries the correct sign.\\
Inserting those in above squared relation reveals as optimal relation
\begin{gather*}
  \left( \frac{dA}{dr} \right)^2 (B+\beta^2) = \left( \frac{dB}{dr} \right)^2 A
  \TAB\implies\TAB
  x^2=\alpha^2\frac{1+\beta^2}{1+\alpha^2}
\end{gather*}
and one may indeed verify that this solves above unsquared equation.\\
Inserting the found relation we obtain as maximal quantum value
\[
  \TAB=\Bigl(\sqrt{A}+\sqrt{B+\beta^2}\Bigr)^2 = (1+\alpha^2)(1+\beta^2)=\frac{1}{4}\eqref{QUANTUM-VALUE}\TAB\TAB\checkmark
\]
which recovers the familiar quantum value as desired.
We finally note how above relation agrees with the amount of anticommutation between generators
\begin{gather*}
  4x^2= 1\tensor|u+v|^2 = 2\mediumleft(1+\{1\tensor u,1\tensor v\}\mediumright)=2\mediumleft(1+s(\Bob)\mediumright):\\[0.5\baselineskip]
  2\mediumleft(s(\Bob)+1\mediumright)= 2\left(\frac{\alpha^2-1}{\alpha^2+1}(\beta^2+1)+(\beta^2+ 1)\right)=4\alpha^2\frac{1+\beta^2}{1+\alpha^2}=4x^2\TAB\checkmark\hspace{-0.5cm}
\end{gather*}
and as such we also recovered the stated amount of anticommutation \eqref{FORMULA}.\\
This concludes the solution to our optimisation problem from theorem \ref{THEOREM:TILTED:QUANTUM-VALUE}.

\bibliography{Bibliography}
\bibliographystyle{alpha-all}

\end{document}